\newtheorem{theorem}{Theorem}
\newtheorem{corollary}{Corollary}
\newtheorem{lemma}{Lemma}
\newtheorem{remark}{Remark}
\def \balpha {\boldsymbol{\alpha}}
\def \tbalpha {\tilde{\boldsymbol{\alpha}}}
\def \bbeta {\mathcal{S}}
\def \usr{\mathcal{S}}
\def \bbr {\boldsymbol{r}}
\def \tbbr {\tilde{\boldsymbol{r}}}
\def \dsumo {D_{\Sigma, o}}
\def \dsumb {D_{\Sigma, b}}
\def \Rsumo {R_{\Sigma, o}}
\def \Rsumb {R_{\Sigma, b}}
\begin{document}

\title{The Extremal GDoF Gain of Optimal versus Binary Power Control in $K$ User Interference Networks Is $\Theta(\sqrt{K})$}

\author{Yao-Chia Chan, Pouya Pezeshkpour, Chunhua Geng, and Syed A. Jafar
\thanks{Y.-C. Chan and S. A. Jafar are with Center for Pervasive Communications and Computing (CPCC), University of California Irvine, Irvine, CA 92697 USA. Email: yaochic@uci.edu; syed@uci.edu. 
	P. Pezeshkpour is with University of California Irvine, Irvine, CA 92697 USA. Email: pezeshkp@uci.edu. 
	C. Geng is with MediaTek USA Inc., Irvine, CA 92618 USA. Email: chunhua.geng@mediatek.com. 
	This works is in part supported by funding from NSF grant CCF-1907053, ONR grant N00014-21-1-2386, and ARO grant W911NF1910344.}%
}

\maketitle

\begin{abstract}
	Using  ideas from Generalized Degrees of Freedom (GDoF) analyses and extremal network theory, this work studies the extremal gain of  optimal power control over binary (on/off) power control,  especially in large interference networks, in search of new theoretical insights. Whereas numerical studies have already established that in most practical settings binary power control is close to optimal, the extremal analysis shows not only that there exist settings where the gain from optimal power control can be quite significant, but also  bounds the extremal values of such gains from a GDoF perspective. As its main contribution, this work explicitly characterizes the extremal GDoF gain of optimal over binary power control as $\Theta\left(\sqrt{K}\right)$ for all $K$. In particular, the extremal gain is bounded between $\lfloor \sqrt{K}\rfloor$ and $2.5\sqrt{K}$ for every $K$. For $K=2,3,4,5,6$ users, the precise extremal gain is found to be $1, 3/2, 2, 9/4$ and $41/16$, respectively. Networks shown to achieve the extremal gain may be interpreted as multi-tier heterogeneous networks. It is worthwhile to note that because of their focus on asymptotic analysis, the sharp characterizations of extremal gains are valuable primarily from a theoretical perspective, and not as contradictions to the conventional wisdom that binary power control is generally close to optimal in  practical, non-asymptotic settings.
	
		
\end{abstract}


%
\IEEEpeerreviewmaketitle

\section{Introduction}
The design of communication systems involves many optimization problems. The inherent combinatorial complexity of these optimizations often makes direct analysis infeasible, and motivates complementary approaches from  theoretical and practical perspectives. On one hand, useful practical insights may be obtained from extensive numerical simulations. On the other hand, theoretical understanding may be advanced through extremal analyses, focused on identifying key cornerstones, e.g., asymptotic limits that are theoretically insightful. For instance,  information theory  often relies on extremal analysis (e.g., bounding the performance of the best code among all codes) under asymptotic (unbounded code length, unbounded complexity) relaxations to obtain  theoretically insightful answers (channel capacity) that are  complemented by numerical simulations to yield efficient communication designs. 

As the focus shifts from channels to networks, the combinatorial challenge is further compounded. Additional tools have  emerged to facilitate extremal/asymptotic theoretical analysis of wireless networks. Among these tools are the high SNR perspective taken by Generalized Degrees of Freedom (GDoF) studies \cite{Etkin_Onebit}, and more recently the idea of extremal network theory introduced in \cite{Chan_extremal}, i.e., the study of the extremal network settings that maximize the gain of one coding scheme over another. Extremal network theory and  practical simulations complement each other --- the latter focuses on what is likely to happen in settings of current practical interest, while the former tries to understand what is the best/worst scenario possible. Extremal network theory is particularly valuable for large networks that are  beyond the reach of numerical simulations.  For example, \cite{Chan_extremal} studies the extremal GDoF gain from joint coding (transmitter cooperation) over separate coding in a $K$ user interference network with finite precision channel state information at the transmitters (finite precision CSIT), and shows that in the parameter regimes known as TIN, CTIN and SLS, the extremal gain is $1.5$, $2-1/K$, and $\Theta(\log(K))$, respectively. Remarkably, \cite{Chan_STIN} finds that if secrecy constraints  are imposed, then there is no gain from transmitter cooperation in the SLS regime, but the extremal gain grows unbounded when the regime is relaxed to the STIN regime. Beyond the $K$ user interference channel, extremal GDoF analysis has  been applied by Joudeh and Caire in \cite{Joudeh_Caire_extremal} to downlink cellular networks, modeled as interfering broadcast channels, to characterize the gains of cooperation among base stations. Building on these advances, in this work we apply the GDoF and extremal network theory framework to an aspect that is essential to wireless network design -- power control \cite{Chiang_PC, Zander_PC,Zander_DPC, Foschini_DPC, Yates_SIF, Bambos_DPC}. In particular, we explore the extremal GDoF gain of optimal power control versus binary (on/off) power control in a (large) $K$ user interference network, where all interference is treated as (Gaussian) noise. 

Power control in conjunction with treating interference as noise (TIN) is one of the most widely prevalent ideas in wireless networks with multiple interfering links, because of its relative simplicity and robustness  compared with other sophisticated schemes, such as rate-splitting \cite{Han_Kobayashi, Clerckx_ratesplitting}, structured codes \cite{Jafar_symmetricIC, Bresler_lattice}, and interference alignment \cite{Jafar_IA}. It has been widely adopted for interference management in wireless local area networks \cite{wifi1} and homo-/heterogeneous cellular networks \cite{survey_muticell}, and remains one of the main tools for interference management in  emerging applications such as device-to-device (D2D) communications \cite{survey_D2D}, vehicle-to-everything (V2X) \cite{survey_V2X} networks, and space-air-ground integrated networks \cite{survey_UAV}. With power control and TIN, the rate $R_k$ achieved by User $k$, where $k\in \{1,2,\cdots, K\}$, is simply expressed as a function of the user's signal to interference and noise power ratio SINR$_k$,  as $R_k=\log(1+\text{SINR}_k)$, which gives rise to the problem of rate-optimization through power control. 
This problem has been the subject of extensive research. In general, the rate and power optimization problem is non-convex and NP-hard \cite{Luo_Spectrum,MAPEL}. 
Lower complexity suboptimal algorithms have been developed based on 
fixed-point iteration \cite{Chiang_Sumrate}, iterative water-filling \cite{Wei_waterfilling, Evans_waterfilling}, game theory \cite{Poor_GT}, geometric programming \cite{Chiang_GP}, weighted mean square error \cite{Luo_WMMSE}, fractional programming \cite{Wei_fraction}, and machine-learning approaches \cite{Sidiropoulos_pwrDNN, Wei_pwrDNN,  Cho_pwrCNN, Ribeiro_pwrGNN, Letaief_pwrGNN}. On the other hand, closed form characterizations of maximum sum-rate achievable with optimal power control remain rare, with analytical efforts mostly limited to  small networks \cite{2user_1,2user_2}, or symmetric settings  \cite{Hanly_Symmetric}. 

As an alternative to optimal power control, binary power control is also of much interest due to its lower implementation complexity. Binary power control refers to the transmission scheme where each transmitter operates at one of  two power levels: it is either switched off completely, or it transmits at full power. Optimal binary power control is essentially equivalent to a scheduling scheme, with the only choice corresponding to the selection of active users. The  rate region achievable with binary power control (and time-sharing) is characterized in \cite{Chara_TIN}. Rate optimization within this rate region is essentially an NP-hard non-convex integer programming problem \cite{S-MAPEL}. 
Lower complexity scheduling algorithms are developed based on message-passing \cite{Lai_BPCmsgpassing},  SINR heuristics \cite{FlashLinQ, Chenwei_BPC},   information-theoretic insights \cite{Naderi_Avestimehr_ITLinQ, Yi_Caire_ITLinQ+}, fractional programming \cite{Wei_FPLinQ}, and machine learning approaches \cite{Geoffery_schedulingGNN, Yu_schedulingDNN, Santiago_schedulingGNN}.
Binary power control is  found to be optimal for sum-rate maximization in many cases, e.g., multiple access channels\cite{Hanly_BPC_uplink}, $2$-user interference channels with single carrier \cite{2user_1,2user_2, BPC_FullDuplex} and multiple carriers \cite{BPC_Multicarrier}, $K$-user one-sided symmetric Wyner-type interference channels \cite{Hanly_BPC}, networks where the transmission rate is an artificial linear function of the received power \cite{linear_BPC}, and networks where either a geometric-arithmetic mean or low-SINR approximation is applicable \cite{2user_2}. Remarkably, numerical simulations of common communication network topologies such as cellular networks and D2D networks \cite{Alouini_cell, Hong_D2D, Andrews_D2D, Schober_D2D, Lozano_D2D} suggest that binary power control  has performance comparable to optimal power control. However, there are no known theoretical bounds on the worst-case sub-optimality penalty of binary power control relative to optimal power control. The extremal analysis undertaken in this paper is aimed at finding such bounds.

Some of the most analytically tractable characterizations of approximate rate regions with power control and TIN have emerged out of  high-SNR analysis in the GDoF framework \cite{Etkin_Onebit}.  
The GDoF region of the $K$ user interference channel with power control and TIN is known to be a union of polyhedra determined by the channel strength parameters \cite{Geng_TIN}. 
The region is in general not convex unless  time-sharing is allowed \cite{Yi_TIN}. 
Algorithms for joint rate assignment and power control under the GDoF framework are developed in \cite{Geng_PC,Geng_duality,Geng_TIM}, and GDoF analyses have also inspired practical spectrum sharing solutions such as ITLinQ \cite{Naderi_Avestimehr_ITLinQ} and ITLinQ+ \cite{Yi_Caire_ITLinQ+}. Notably, under certain parameter regimes, power control and TIN are known to be information theoretically optimal in the GDoF sense \cite{Geng_TIN,Yi_TIN,Chan_extremal}. Nevertheless, despite their relative simplicity, even the GDoF characterizations suffer from the curse of dimensionality as we study large networks. For a $K$ user interference channel, the GDoF region achievable by power control and TIN is a union of an exponential (in $K$) number of regions (corresponding to all subsets of active users), each characterized by an exponential number of so-called cycle bounds, which makes it challenging to extract fundamental insights about large networks from their GDoF regions. This is the challenge that motivates the extremal network theory perspective introduced in \cite{Chan_extremal}, which we use in this work to compare optimal power control with binary power control, especially for large networks.  Specifically  we  explore the extremal gain, defined by the supremum of the ratio of the sum GDoF corresponding to optimal and binary power control, across all possible network topologies. Our main result is that the extremal gain of optimal over binary power control is $\Theta(\sqrt{K})$. We also explicitly identify a class of network topologies that allow this extremal gain asymptotically at high SNR. Remarkably, these topologies  may be interpreted as  multi-tier heterogeneous networks. Thus, the extremal analysis adds surprising theoretical insights to the picture presented by numerical simulations --- whereas the numerical studies shows that binary power control is close to optimal for most commonly studied network topologies,  the theoretical analysis reveals exceptional network topologies where binary power control can be significantly suboptimal, and presents a tight (orderwise) bound on this suboptimality penalty. Interestingly, for smaller interference networks, with $K=2,3,4,5, 6$ users, we explicitly characterize the exact extremal GDoF gain of optimal over binary power control as $1, 3/2, 2, 9/4, 41/65$, respectively.

\emph{Notation}: For a real number $b$, $(b)^+$ and $\{b\}^+$ denote $\max\{0,b\}$. 
For a set $A$, $|A|$ denotes its cardinality. The set $\mathbb{N}$ collects all positive integers. For a positive integer $N$, define $[N] = \{1,2,\cdots, N\}$.
For two functions $f(x), g(x)$, $f(x) = \Theta(g(x))$ if $c_1 \leq \liminf_{x \rightarrow \infty} \left| \frac{f(x)}{g(x)} \right| \leq \limsup_{x \rightarrow \infty} \left| \frac{f(x)}{g(x)} \right| \leq c_2$ for some constant $c_1, c_2 >0$.

\section{System Model} \label{sec:model}

We consider a $K$-user Gaussian interference channel where each of the transmitters and receivers is equipped with a single antenna.
The signal observed by Receiver $k$ in the $t$-th channel use is,
\begin{align} \label{eq:GaussianModel}
	Y_k(t) &= \sum_{i=1}^{K} |h_{ki}| e^{j\theta_{ki}} \tilde{X}_i(t) + Z_k(t), &&\forall t \in [T], k \in [K],
\end{align}
where $h_{ki} \triangleq |h_{ki}| e^{j\theta_{ki}}$ is the channel coefficient between Transmitter $i$ and Receiver $j$, $\tilde{X}_i(t)$ is the complex-valued symbol sent by Transmitter $i$, subject to power constraint $\mathbb{E}[ |\tilde{X}_i(t)|^2] \leq P_i$, and $Z_k(t) \sim \mathcal{CN}(0,1)$ is the additive white Gaussian noise (AWGN) at Receiver $k$. 
We assume  perfect channel state information at the receivers (CSIR), while the transmitters know at least the magnitudes of the channel coefficients.

Following the standard GDoF formulation \cite{Etkin_Onebit, Geng_TIN}, we translate (\ref{eq:GaussianModel}) into the following,
\begin{align}\label{eq:model}
	Y_k(t) &= \sum_{i=1}^{K} \sqrt{P^{\alpha_{ki}}} X_i(t) + Z_k(t), &&\forall t \in [T], k \in [K].
\end{align}
Here $X_i(t)$ is the complex-valued symbol sent from Transmitter $i$, subject to unit power constraint, i.e., $ \mathbb{E}[ |{X}_i(t)|^2] \leq 1$. The parameter $\alpha_{ki} \triangleq (\log |h_{ki}|^2 P_i)^+ \geq 0$ 
is the channel strength parameter for the link between Transmitter $i$ and Receiver $k$.
Note that the phase terms $ e^{j\theta_{ki}}$ are omitted in (\ref{eq:model}), as they are inconsequential to the scheme of power control and treating interference as noise \cite{Geng_TIN} which is the focus of this work. For compact notation, let us collect all $\alpha_{ij}$ into $\balpha$ defined as
\begin{align} \label{eq:alpha}
	\balpha &=[\alpha_{ij}]_{i\in[K],j\in[K]},
\end{align}
which we will refer to as the network topology.

Next we formalize the scheme of power control and TIN.
Message $W_i\in\mathcal{W}_i$ originating at Transmitter $i$ is encoded  into codeword $X_i(t)$, $t\in[T]$,   with Gaussian codebooks of rate $R_i = \frac{1}{T} \log_2 |\mathcal{W}_i|$ and power  $\mathbb{E}[|X_{t}|^2] = P^{r_i}\leq 1$, where $r_i \leq 0$ are called the power allocation variables. Let us define the vector of power allocation variables as
\begin{align}
	\bbr&\triangleq (r_1, r_2, \cdots, r_K).
\end{align}

At each Receiver $i$, $i\in[K]$, the desired message is decoded by treating interference as noise, so that the following rate is achieved.
\begin{align}
	R_i(\balpha, P,\bbr) = \log_2 \left( 1 + \frac{ P^{\alpha_{ii}+r_i}}{ 1+ \sum_{k \in [K]\setminus\{i\}} P^{\alpha_{ik}+r_k} }  \right). \label{eq:tinRate}
\end{align}
We define $d_i(\balpha,\bbr)$ as the GDoF limit for the $i^{th}$ user, i.e.,
\begin{align}
	d_i(\balpha,\bbr) \triangleq \lim_{P \rightarrow \infty} \frac{R_i(\balpha,P,\bbr)}{\log_2 P}
	&= \left(\alpha_{ii} + r_i - \max_{k \in [K]\setminus\{i\}} \{ \alpha_{ik} + r_k \}^+ \right)^+, \label{eq:tinGDoF}
\end{align}
where the second equality comes from \cite[Eq. (7)]{Geng_TIN}.
The sum rate and sum GDoF achieved by power control and treating interference as noise are defined accordingly as,
\begin{align}
	{R}_\Sigma (\balpha, P, \bbr) &= \sum_{k=1}^{K} R_i(\balpha,P,\bbr),\\
	{D}_\Sigma (\balpha, \bbr) &= \sum_{k=1}^{K} d_i(\balpha,\bbr).
\end{align}

\subsection{Optimal and Binary Power Control} \label{sec:pwrctrl}
The main goal of this work is to compare Optimal Power Control (OPC) against Binary Power Control (BPC). Let us define these two power control schemes under the GDoF framework.
With OPC, the power exponents $r_k$ are allowed to take an arbitrary non-positive value, i.e., $r_k \leq 0$, for all $k \in [K]$.
On the other hand, Binary Power Control (BPC) allows each $r_i$ to take only binary values, $-\infty$ and $0$. These two values correspond to the two operating states of a transmitter: the former is equivalent to switching it off, while the latter amounts to transmitting at  full power. 	
The sum rate with OPC and BPC are respectively defined as 
\begin{align}
	\Rsumo (\balpha, P) &= \max_{\bbr \in \Omega^K} R_\Sigma(\balpha, P, \bbr), \label{eq:Rsumo}\\
	\Rsumb (\balpha, P) &= \max_{\bbr \in \Omega_b^K} R_\Sigma(\balpha, P, \bbr), \label{eq:Rsumb}
\end{align}
where $\Omega^K = \{ \bbr | -\infty \leq r_k \leq 0, \forall k \in [K]\}$, and $\Omega_b^K = \{ \bbr | r_k  \in \{ -\infty, 0 \}, \forall k \in [K]\}$.
The sum GDoF with OPC, $\dsumo$, and the one with BPC, $\dsumb$, are defined accordingly, i.e.,
\begin{align}
	\dsumo (\balpha) &= \max_{\bbr \in \Omega^K} D_\Sigma(\balpha, \bbr), \label{eq:dsumo}\\
	\dsumb (\balpha) &= \max_{\bbr \in \Omega_b^K} D_\Sigma(\balpha, \bbr).\label{eq:dsumb}
\end{align}

\section{Results: Extremal GDoF Gain} \label{sec:results}

The sum GDoF with OPC (\ref{eq:dsumo}) and the one with BPC (\ref{eq:dsumb}) are both functions of the network topology $\balpha$. The sum-GDoF gain, defined by the ratio of the sum GDoF achieved with OPC and BPC, also varies with different topologies. By taking the supremum over all possible topologies, we define\footnote{Note that $\mu_K$ is well-defined because it has an upper bound $K$, which is implied by the fact $\dsumo(\balpha) \leq K \max_{i \in [K]} \alpha_{ii}$ and the fact $\dsumb(\balpha) \geq \max_{i \in [K]} \alpha_{ii}$ for all $\balpha \in \mathcal{A}_K$.} the extremal GDoF gain for the $K$ user interference networks as,
\begin{align}
	\mu_K &= \sup_{\balpha \in \mathcal{A}_K } \frac{\dsumo(\balpha)}{\dsumb(\balpha)}, \label{eq:muK}
\end{align}
where $\mathcal{A}_K = \{ \balpha:~ \alpha_{ij} \geq 0, \forall i,j \in [K]\}$ denotes the set of all possible network topologies $\balpha$ for a $K$-user interference network.
In the following theorem we identify the exact value of the extremal GDoF gain for small networks with $6$ or fewer users.

\begin{theorem}\label{thm:2-6usr}
	For interference networks with $K=2,3,4,5,6$ users, the exact value of the extremal GDoF gain $\mu_K$ of optimal over binary power control  is listed in the following table.
	
	\begin{center}
		\begin{tabular}{ c | c | c | c | c | c }
			$K$ & $2$ & $3$ & $4$ & $5$ & $6$ \\ \hline
			$\mu_K$ & $1$ & $3/2$ & $2$ & $9/4$ & $41/16$ \\
		\end{tabular}
	\end{center}

\end{theorem}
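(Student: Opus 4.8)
\emph{Overall plan.} For each $K\in\{2,3,4,5,6\}$ I would establish a matching lower bound, via an explicit network attaining the stated ratio, and an upper bound $\mu_K\le c_K$ valid for all topologies; together these pin $\mu_K=c_K$.

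\emph{Lower bound (constructions).} For $K=2$ one only needs the trivial $\mu_2\ge 1$. For $K\in\{3,4,5,6\}$ I would exhibit one topology $\balpha\in\mathcal A_K$ of ``multi-tier'' type and evaluate \eqref{eq:dsumo} and \eqref{eq:dsumb} on it directly: the users are split into tiers of strictly decreasing direct strength $\alpha_{ii}$; every transmitter interferes with every receiver in every lower tier at about that receiver's own strength; each tier carries a tuned amount of same-tier interference; and lower tiers do not interfere with higher ones. Then binary control is confined to essentially one active tier --- by \eqref{eq:tinGDoF}, running any tier at full power erases all lower tiers --- so $\dsumb$ is attained by one small active set, which I can name and then certify optimal by checking all $2^K$ subsets against \eqref{eq:dsumb}. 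Optimal control instead cascades: each tier is lowered by exactly enough to clear the interference it causes to all lower tiers while retaining a positive own-GDoF, so all tiers stay alive simultaneously, and \eqref{eq:tinGDoF} at this $\bbr$ gives $\dsumo$. For instance, $K=3$: tier~$1=\{1\}$ with $\alpha_{11}=2$, tier~$2=\{2,3\}$ with $\alpha_{22}=\alpha_{33}=1$, $\alpha_{21}=\alpha_{31}=1$ and all remaining cross-links $0$; then $\dsumb=2$ (via $\{1\}$ or $\{2,3\}$) while $\bbr=(-1,0,0)$ gives $\dsumo\ge 3$, so $\mu_3\ge 3/2$. For $K=4$: two tiers of two users, the top with $\alpha_{ii}=2$ and same-tier interference $1$, the bottom with $\alpha_{ii}=1$, every top transmitter hitting every bottom receiver at strength $1$; then $\dsumb=2$ and $\bbr=(-1,-1,0,0)$ gives $\dsumo\ge 4$, so $\mu_4\ge 2$. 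The \emph{balanced} version of this construction only yields $K/\max(L,\lceil K/L\rceil)$, i.e.\ $5/3$ for $K=5$ and $2$ for $K=6$; the topologies realizing $9/4$ and $41/16$ are less symmetric and I would read them off from the upper-bound optimization below, which is also why these values look irregular.

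\emph{Upper bound.} The plan is to reduce $\sup_{\balpha\in\mathcal A_K}\dsumo(\balpha)/\dsumb(\balpha)$ to a finite computation. Normalizing $\max_i\alpha_{ii}=1$ forces $\alpha_{ii}\le 1$, since switching on a single user gives $\dsumb\ge\max_i\alpha_{ii}$; and replacing every $\alpha_{ij}$ by $\min(\alpha_{ij},\alpha_{ii})$ leaves each $d_i$ under binary control (hence $\dsumb$) unchanged while not decreasing $\dsumo$, so one may also take $\alpha_{ij}\le 1$, and the search runs over a compact polytope. Both $\dsumo$ and $\dsumb$ are maxima of finitely many linear programs: freezing the active set $S$ and, for each user, which branch of the outer $\{\cdot\}^+$ and which term of the inner $\max$ in \eqref{eq:tinGDoF} attains the value makes $D_\Sigma(\balpha,\bbr)$ linear in $\bbr$ over an $\balpha$-parametrized polyhedron (with $\bbr$ unrestricted, resp.\ binary); alternatively one invokes the cycle-bound description of the polyhedral TIN region together with LP duality. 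Hence both functions are piecewise linear in $\balpha$, and on each cell of the common polyhedral subdivision of the domain the ratio $\dsumo/\dsumb$ is linear over linear, attaining its maximum at a vertex; since $\mu_K\le K$ this maximum is finite and equals the largest such vertex value. Carrying out this vertex enumeration for $K\le 6$ produces the table.

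\emph{Main obstacle.} The real work is in making that enumeration effective: bounding how many combinatorial ``types'' (choice of active set, binary schedule, and branch pattern) can possibly be extremal, excluding unbounded or degenerate directions so the reduction to vertices is genuinely finite, and --- hardest --- identifying the extremal type for $K=5$ and $K=6$, which is \emph{not} the naive balanced-tier configuration, so that both $c_K$ and the matching extremal topology must be extracted from a somewhat involved case analysis rather than guessed. The scheme-independent worst-case argument that yields the general bound $\mu_K\le 2.5\sqrt{K}$ is far too loose to help here, so no shortcut through it is available for $K\le 6$.
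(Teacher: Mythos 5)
Your lower bounds for $K=2,3,4$ are fine (your $K=3$ network is the paper's Figure \ref{fig:extremalnet}(a) up to relabeling, and your $K=4$ network, though different from Figure \ref{fig:extremalnet}(b), does give $\dsumo\geq 4$, $\dsumb=2$). But the rest of the argument is a plan rather than a proof, and the part you defer is precisely the part that carries the content of the theorem. You give no achievable topology for $K=5$ or $K=6$ (you correctly observe that the balanced tier construction tops out at $2$, but then propose to ``read off'' the extremal networks from an optimization you never perform), and your converse for every $K\geq 3$ rests on a vertex enumeration over the common polyhedral refinement of the cells of $\dsumo$ and $\dsumb$ in $[0,1]^{K^2}$. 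That reduction is sound in principle (after normalization the denominator is $\geq 1$, both functions are piecewise linear, and a linear-fractional objective is maximized at a vertex of each cell), but it is not carried out, and in the naive form you describe it is combinatorially hopeless even for $K=4$: the cell complex is generated by the branch patterns of all the nested maxima in \eqref{eq:tinGDoF} across all $2^K$ active sets and all parametric-LP bases of the OPC problem, living in a $36$-dimensional space for $K=6$. You acknowledge this as the ``main obstacle,'' which is exactly the point: without a tractable converse technique, the values $9/4$ and $41/16$ cannot be produced.

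The idea you are missing is the paper's Lemma \ref{lemma:bo}. Restricting (via Lemma \ref{lemma:mubound} and induction on $K$) to topologies in $\mathcal{A}_K^+$, one fixes an OPC-optimal allocation $\tilde{\bbr}$ and, for every ordered subset $[\beta_1,\dots,\beta_L]$ with $\tilde r_{\beta_1}\geq\cdots\geq\tilde r_{\beta_L}$, derives a lower bound $B_{[\beta_1,\dots,\beta_L]}$ on $\dsumb$ of the form $\sum_i d_{\beta_i}(\tilde\balpha,\tilde\bbr)$ plus explicit correction terms in the $\tilde r_{\beta_i}$; the proof inflates the cross-links among $\{\beta_1,\dots,\beta_L\}$ so that all interferers are equally strong (which cannot decrease $\dsumo$ or increase $\dsumb$) and then switches that subset on at full power. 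The converse for each $K\leq 6$ is then a short, hand-checkable nonnegative combination of a few such bounds chosen so that the $\tilde r_i$ terms either cancel or have the right sign: e.g.\ $B_{[1,2]}+B_{[2,3]}+B_{[1,3]}$ gives $3\dsumb\geq 2\dsumo$ for $K=3$, and for $K=6$ a weighted sum of nine such bounds gives $41\dsumb\geq 16\dsumo$. This converse machinery also tells you which inequalities must be tight, which is how the asymmetric extremal networks for $K=5,6$ are found. Without Lemma \ref{lemma:bo} or an equivalent device, your proposal establishes only $\mu_2=1$, $\mu_3\geq 3/2$, and $\mu_4\geq 2$.
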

The proof of Theorem \ref{thm:2-6usr} appears in Section \ref{sec:proof2-6usr}. 

\begin{enumerate}[wide, labelindent=1em ,labelwidth=!, labelsep*=1em, leftmargin =0em, style = sameline , label=\it Observation(\it\arabic*)]

	\item Network topologies that achieve the extremal GDoF gains $\mu_3=3/2, \mu_4=2, \mu_5=9/4$ and $\mu_6=41/16$ for $K=3,4,5,6$ users, respectively, are illustrated in Figure \ref{fig:extremalnet}. These are extremal networks, because according to Theorem \ref{thm:2-6usr} for each value of $K$, no other topology can achieve a higher GDoF gain from optimal power control over binary power control.
	\item While conventional wisdom backed by extensive numerical simulations has already established that for most networks binary power control tends to be close to optimal, evidently even for relatively smaller number of users there exist networks (as shown in Figure \ref{fig:extremalnet}) where optimal power control achieves significant (e.g., factor of $2$ for $K=4$ users) gains in throughput over binary power control.
	\item \label{obs:noguess} Apparently, $\mu_K$ is increasing with $K$. However, the rate of increase does not behave in a monotonic fashion. As $K$ takes values $2,3,4,5,6$, the successive increases in $\mu_K$ are $50\%, 33\%, 12.5\%, 13.89\%$, respectively. This observation shows the difficulty of guessing the exact general functional form of $\mu_K$ from the study of smaller networks.
	
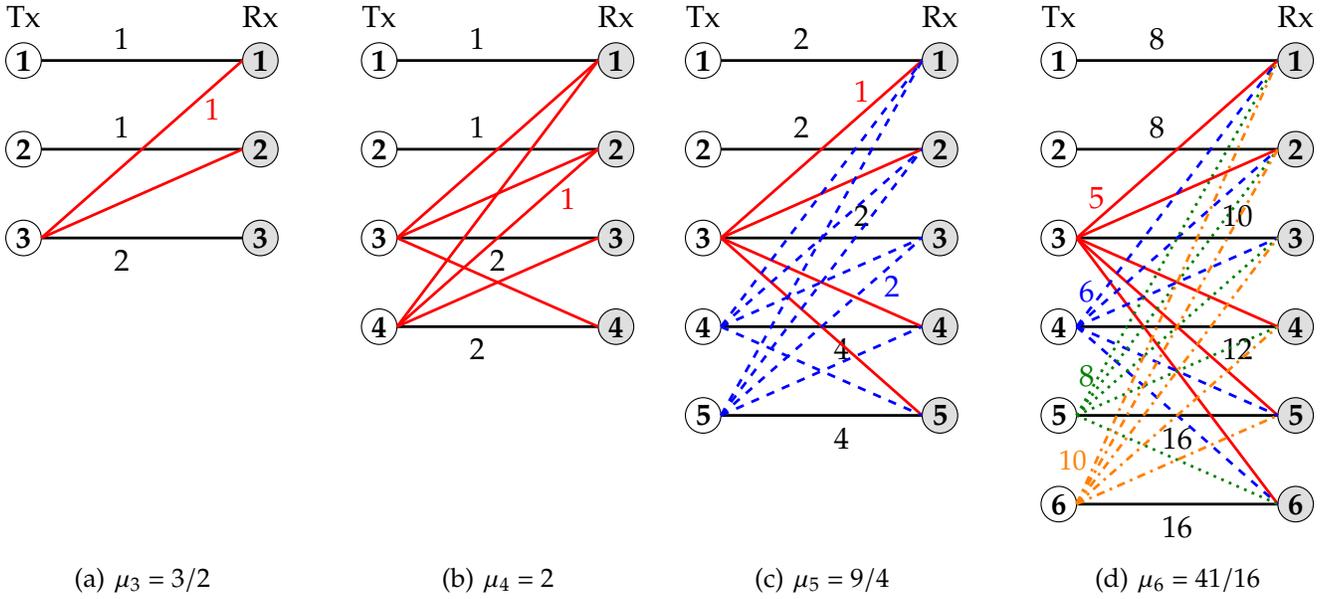
\begin{figure}[t!]
		
		\floatsetup{heightadjust = all, valign=t}
		\ffigbox[][]{%
			\begin{subfloatrow}
				\ffigbox[\FBwidth][8cm]{
					\resizebox{0.23\textwidth}{!}{
						\begin{tikzpicture}[scale=1.5]
							\def \r {0.15}
							\def \w {2}
							\def \d {0.75}
							
							\foreach \a in {1,2,3}{
								\node (T\a) at (0, {-1 * \a * \d}) {};
								\node (R\a) at (\w, {-1 * \a * \d}) {};
							}
							\node (Tx) at ($(T1) + (0, 0.5*\d)$) {Tx};
							\node (Rx) at ($(R1) + (0, 0.5*\d)$) {Rx};
							
							\foreach \a in {1,2,3}{
								\draw (T\a) circle (\r) node {$\boldsymbol{\a}$};
								\draw [fill = gray!25] (R\a) circle (\r) node {$\boldsymbol{\a}$};
							}
							
							\draw [line width = 1, black]($(T1)+(\r,0)$)--($(R1)-(\r,0)$) node [pos=0.4, above] {$1$};
							\draw [line width = 1, black]($(T2)+(\r,0)$)--($(R2)-(\r,0)$) node [pos=0.4, above] {$1$};
							\draw [line width = 1, black]($(T3)+(\r,0)$)--($(R3)-(\r,0)$) node [pos=0.4, below] {$2$};
							\draw [line width = 1, red]($(T3)+(\r,0)$)--($(R1)-(\r,0)$) node [pos=0.85, below, red] {$1$};
							\draw [line width = 1, red]($(T3)+(\r,0)$)--($(R2)-(\r,0)$);
						\end{tikzpicture}
				}}{\caption{$\mu_3=3/2$}}
				\ffigbox[\FBwidth][8cm]{
					\resizebox{0.23\textwidth}{!}{
						\begin{tikzpicture} [scale=1.5]
							
							\def \r {0.15}
							\def \w {2}
							\def \d {0.75}
							
							\foreach \a in {1,2,3,4}{
								\node (T\a) at (0, {-1 * \a * \d}) {};
								\node (R\a) at (\w, {-1 * \a * \d}) {};
							}
							\node (Tx) at ($(T1) + (0, 0.5*\d)$) {Tx};
							\node (Rx) at ($(R1) + (0, 0.5*\d)$) {Rx};
							
							\foreach \a in {1,2,3,4}{
								\draw (T\a) circle (\r) node {$\boldsymbol{\a}$};
								\draw [fill = gray!25] (R\a) circle (\r) node {$\boldsymbol{\a}$};
							}
							
							\draw [line width = 1, black]($(T1)+(\r,0)$)--($(R1)-(\r,0)$) node [pos=0.4, above] {$1$};
							\draw [line width = 1, black]($(T2)+(\r,0)$)--($(R2)-(\r,0)$) node [pos=0.4, above] {$1$};
							\draw [line width = 1, black]($(T3)+(\r,0)$)--($(R3)-(\r,0)$) node [pos=0.5	, below] {$2$};
							\draw [line width = 1, black]($(T4)+(\r,0)$)--($(R4)-(\r,0)$) node [pos=0.4, below] {$2$};
							\draw [line width = 1, red]($(T3)+(\r,0)$)--($(R1)-(\r,0)$);
							\draw [line width = 1, red]($(T3)+(\r,0)$)--($(R2)-(\r,0)$);
							\draw [line width = 1, red]($(T3)+(\r,0)$)--($(R4)-(\r,0)$);
							\draw [line width = 1, red]($(T4)+(\r,0)$)--($(R1)-(\r,0)$);
							\draw [line width = 1, red]($(T4)+(\r,0)$)--($(R2)-(\r,0)$) node [pos=0.85, below, red] {$1$};
							\draw [line width = 1, red]($(T4)+(\r,0)$)--($(R3)-(\r,0)$);
						\end{tikzpicture}
				}}{\caption{$\mu_4=2$}}
				\ffigbox[\FBwidth][8cm]{
					\resizebox{0.23\textwidth}{!}{
						\begin{tikzpicture} [scale=1.5]
							\def \r {0.15}
							\def \w {2}
							\def \d {0.75}
							
							\foreach \a in {1,2,3,4,5}{
								\node (T\a) at (0, {-1 * \a * \d}) {};
								\node (R\a) at (\w, {-1 * \a * \d}) {};
							}
							\node (Tx) at ($(T1) + (0, 0.5*\d)$) {Tx};
							\node (Rx) at ($(R1) + (0, 0.5*\d)$) {Rx};
							
							\foreach \a in {1,2,3,4,5}{
								\draw (T\a) circle (\r) node {$\boldsymbol{\a}$};
								\draw [fill = gray!25] (R\a) circle (\r) node {$\boldsymbol{\a}$};
							}
							
							\draw [line width = 1, black]($(T1)+(\r,0)$)--($(R1)-(\r,0)$) node [pos=0.4, above] {$2$};
							\draw [line width = 1, black]($(T2)+(\r,0)$)--($(R2)-(\r,0)$) node [pos=0.4, above] {$2$};
							\draw [line width = 1, black]($(T3)+(\r,0)$)--($(R3)-(\r,0)$) node [pos=0.7	, above] {$2$};
							\draw [line width = 1, black]($(T4)+(\r,0)$)--($(R4)-(\r,0)$) node [pos=0.6, below] {$4$};
							\draw [line width = 1, black]($(T5)+(\r,0)$)--($(R5)-(\r,0)$) node [pos=0.6, below] {$4$};
							
							\draw [line width = 1, red]($(T3)+(\r,0)$)--($(R1)-(\r,0)$) node [pos=0.7, above] {$1$};
							\foreach \a in {2,4,5}{
								\draw [line width = 1, red]($(T3)+(\r,0)$)--($(R\a)-(\r,0)$);
							}
							
							\foreach \a in {1,2,3,5}{
								\draw [line width = 1, blue, dashed]($(T4)+(\r,0)$)--($(R\a)-(\r,0)$);
							}
							\foreach \a in {1,2,4}{
								\draw [line width = 1, blue, dashed]($(T5)+(\r,0)$)--($(R\a)-(\r,0)$);
							}
							\draw [line width = 1, blue, dashed]($(T5)+(\r,0)$)--($(R3)-(\r,0)$) node [pos=0.85, below] {$2$};
						\end{tikzpicture}
				}}{\caption{$\mu_5=9/4$}}
				\ffigbox[\FBwidth][8cm]{
					\resizebox{0.23\textwidth}{!}{
						\begin{tikzpicture} [scale=1.5]
							\def \r {0.15}
							\def \w {2}
							\def \d {0.75}
							
							\foreach \a in {1,2,3,4,5,6}{
								\node (T\a) at (0, {-1 * \a * \d}) {};
								\node (R\a) at (\w, {-1 * \a * \d}) {};
							}
							\node (Tx) at ($(T1) + (0, 0.5*\d)$) {Tx};
							\node (Rx) at ($(R1) + (0, 0.5*\d)$) {Rx};
							\foreach \a in {1,2,3,4,5,6}{
								\draw (T\a) circle (\r) node {$\boldsymbol{\a}$};
								\draw [fill = gray!25] (R\a) circle (\r) node {$\boldsymbol{\a}$};
							}
							
							\draw [line width = 1, black]($(T1)+(\r,0)$)--($(R1)-(\r,0)$) node [pos=0.4, above] {$8$};
							\draw [line width = 1, black]($(T2)+(\r,0)$)--($(R2)-(\r,0)$) node [pos=0.4, above] {$8$};
							\draw [line width = 1, black]($(T3)+(\r,0)$)--($(R3)-(\r,0)$) node [pos=0.8, above] {$10$};
							\draw [line width = 1, black]($(T4)+(\r,0)$)--($(R4)-(\r,0)$) node [pos=0.8, below] {$12$};
							\draw [line width = 1, black]($(T5)+(\r,0)$)--($(R5)-(\r,0)$) node [pos=0.5, below] {$16$};
							\draw [line width = 1, black]($(T6)+(\r,0)$)--($(R6)-(\r,0)$) node [pos=0.5, below] {$16$};
							
							\draw [line width = 1, red]($(T3)+(\r,0)$)--($(R1)-(\r,0)$) node [pos= 0.1, above] {5};
							\foreach \a in {2,4,5,6}{
								\draw [line width = 1, red]($(T3)+(\r,0)$)--($(R\a)-(\r,0)$);
							}
							
							\draw [line width = 1, blue, dashed]($(T4)+(\r,0)$)--($(R1)-(\r,0)$) node [pos= 0.05, above] {6};
							\foreach \a in {2,3,5,6}{
								\draw [line width = 1, blue, dashed]($(T4)+(\r,0)$)--($(R\a)-(\r,0)$);
							}
							
							\draw [line width = 1, green!50!black, dotted]($(T5)+(\r,0)$)--($(R1)-(\r,0)$)  node [pos= 0.05, above] {8};
							\foreach \a in {2,3,4,6}{
								\draw [line width = 1, green!50!black, dotted]($(T5)+(\r,0)$)--($(R\a)-(\r,0)$);
							}
							
							\draw [line width = 1, orange, dash dot]($(T6)+(\r,0)$)--($(R1)-(\r,0)$)node [pos= 0.1, above, left = -0.5] {\small 10};
							\foreach \a in {2,3,4,5}{
								\draw [line width = 1, orange, dash dot]($(T6)+(\r,0)$)--($(R\a)-(\r,0)$);
							}
						\end{tikzpicture}		
				}}{\caption{$\mu_6=41/16$}}
			\end{subfloatrow}
		}{\caption{\small \it $K$ user interference networks achieving extremal GDoF gain $\mu_K=3/2, 2, 9/4, 41/16$, respectively, for (a) $K=3$, (b) $K=4$, (c) $K=5$, and (d) $K=6$ users.		}
			\label{fig:extremalnet}}
	\end{figure}

\end{enumerate}

The extremal sum-GDoF gain metric $\mu_K$   is grounded in asymptotic (high-SNR) analysis because of analytical tractability and the potential for sharp insights that are difficult to obtain directly in finite-SNR regimes, especially for large networks. Intuitively, because extremal gains are  likely to manifest in high SNR regimes in any case, we expect that the extremal sum-GDoF gain should closely reflect the unconstrained extremal sum-rate gain across \emph{all} SNR regimes, i.e., without the restriction of $P\rightarrow\infty$. The following corollary of  Theorem \ref{thm:2-6usr} confirms this intuition for smaller networks. The proof of Corollary \ref{cor:2-4usr} is relegated to Section \ref{sec:proofCor}.

\begin{corollary}\label{cor:2-4usr}
	Consider the extremal sum-rate gain, which is defined for $K$ user interference networks as follows,
	\begin{align}
		\eta_K = \sup_{\balpha \in \mathcal{A}_K, P \geq 0 } \frac{\Rsumo(\balpha,P)}{\Rsumb(\balpha,P)}. \label{eq:etaK}
	\end{align}
	The extremal sum-rate gain $\eta_K$ is equal to the extremal sum-GDoF gain $\mu_K$ for For $K=2,3,4$, i.e.,
	
	\begin{center}
		\begin{tabular}{ c | c | c | c}
			$K$ & $2$ &$3$ &$4$ \\ 
			\hline
			$\eta_K$ & $1$ &$\frac{3}{2}$ &$2$
		\end{tabular}
	\end{center}

\end{corollary}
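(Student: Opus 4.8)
\emph{Proof plan.} I would prove $\eta_K\ge\mu_K$ and $\eta_K\le\mu_K$ separately; the first holds for all $K$, and the restriction to $K\le4$ is needed only for the second. For $\eta_K\ge\mu_K$, bounding the interference-plus-noise term $1+\sum_{k\ne i}P^{\alpha_{ik}+r_k}$ in (\ref{eq:tinRate}) between $\max\{1,\max_{k\ne i}P^{\alpha_{ik}+r_k}\}$ and $K$ times that quantity yields, for $P\ge1$, the elementary sandwich $d_i(\balpha,\bbr)\log_2 P-\log_2 K\le R_i(\balpha,P,\bbr)\le d_i(\balpha,\bbr)\log_2 P+1$. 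Summing over $i$ and optimizing the power vector on each side shows $\Rsumo(\balpha,P)/\log_2 P\to\dsumo(\balpha)$ and $\Rsumb(\balpha,P)/\log_2 P\to\dsumb(\balpha)$ as $P\to\infty$, hence $\Rsumo(\balpha,P)/\Rsumb(\balpha,P)\to\dsumo(\balpha)/\dsumb(\balpha)$ for every $\balpha$ with $\dsumb(\balpha)>0$; taking the supremum over $\balpha$ gives $\eta_K\ge\sup_\balpha\dsumo(\balpha)/\dsumb(\balpha)=\mu_K$.

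\emph{The reverse inequality $\eta_K\le\mu_K$.} For $K=2$ it is immediate from the known sum-rate optimality of binary power control in the two-user interference channel with treating interference as noise \cite{2user_1,2user_2}: $\Rsumo(\balpha,P)=\Rsumb(\balpha,P)$, so $\eta_2=1$. For $K=3,4$ I would bootstrap from this. Fix $\balpha\in\mathcal{A}_K$ and $P\ge1$ (the case $P=1$ is trivial, since (\ref{eq:tinRate}) is then $\balpha$-independent and $\Rsumo=\Rsumb$), let $\bbr^\star$ attain $\Rsumo(\balpha,P)$, and write $R_i^\star=R_i(\balpha,P,\bbr^\star)$ so that $\sum_iR_i^\star=\Rsumo(\balpha,P)$. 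The crucial elementary observation is that deactivating transmitters only removes nonnegative terms from the interference sum in (\ref{eq:tinRate}), so it cannot decrease any surviving user's rate; hence for any pair $\{i,j\}$ the two-user subnetwork induced by $\{i,j\}$, operated at powers $(r_i^\star,r_j^\star)$, has sum rate at least $R_i^\star+R_j^\star$, and by the two-user result this is at most the binary-optimal sum rate of that subnetwork --- which, completed by switching every other transmitter off, is a binary allocation of the full $K$-user network with the same sum rate. Thus $\Rsumb(\balpha,P)\ge\max_{i\ne j}(R_i^\star+R_j^\star)$. For $K=3$ this gives $\Rsumb(\balpha,P)\ge\Rsumo(\balpha,P)-\min_kR_k^\star\ge\tfrac{2}{3}\Rsumo(\balpha,P)$; for $K=4$, splitting $[4]$ into two pairs, the two pair-sums are each at most $\Rsumb(\balpha,P)$ but sum to $\Rsumo(\balpha,P)$, so $\Rsumb(\balpha,P)\ge\tfrac{1}{2}\Rsumo(\balpha,P)$. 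Hence $\eta_3\le\tfrac{3}{2}$ and $\eta_4\le2$, and together with the first part and the values $\mu_3=\tfrac{3}{2},\mu_4=2$ from Theorem \ref{thm:2-6usr} this yields $\eta_K=\mu_K$ for $K=2,3,4$.

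\emph{Main obstacle.} The sticking point --- and the reason the statement stops at $K=4$ --- is that for $K\ge5$ a single pair of users need only carry a $2/K$ fraction of $\Rsumo(\balpha,P)$ in the worst case, so this pairing argument gives only $\eta_K\le K/2$, which overshoots $\mu_5=9/4$ and $\mu_6=41/16$. Using larger subnetworks together with the freshly proved values of $\eta_3,\eta_4$ improves the constant but still does not reach $\mu_K$, so a tight finite-SNR bound for $K\ge5$ seems to require re-deriving the combinatorial core behind Theorem \ref{thm:2-6usr} directly at finite SNR rather than through the $P\to\infty$ limit; that is the main difficulty I anticipate.
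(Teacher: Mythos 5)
Your proposal is correct and follows essentially the same route as the paper: the lower bound $\eta_K\ge\mu_K$ comes from the fact that the sum-GDoF ratio is the $P\to\infty$ limit of the sum-rate ratio (which you spell out more carefully than the paper does), and the upper bounds for $K=3,4$ come from decomposing $\Rsumo$ into pair sums, bounding each pair by the binary-optimal sum rate of the induced two-user subnetwork via the known two-user BPC optimality result, and noting that such a subnetwork allocation is a valid binary allocation of the full network. Your $\Rsumb\ge\Rsumo-\min_kR_k^\star$ step for $K=3$ is just a repackaging of the paper's summation of the three pair bounds $2\Rsumo\le3\Rsumb$, and your closing remark about why the argument stalls at $K=5$ matches the paper's reason for stopping at $K=4$.
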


Next we proceed to the main goal of this work, i.e., to understand the extremal GDoF gain for large interference networks. Since the general form of the $\mu_K$ function is not readily apparent from the studies of small networks, we must directly explore large interference networks. Our main result  is stated in the following theorem.

\begin{theorem}\label{thm:OsqrtK}
	For $K$-user interference channels, the extremal  sum-GDoF gain of optimal over binary power control, $\mu_K$, satisfies $\mu_K = \Theta(\sqrt{K})$. Specifically,
	\begin{align}
		1 \leq \liminf_{K \rightarrow \infty} \frac{\mu_K}{\sqrt{K}} \leq \limsup_{K \rightarrow \infty} \frac{\mu_K}{\sqrt{K}} \leq 5/2.
	\end{align}
\end{theorem}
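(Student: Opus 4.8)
The plan is to establish the lower and upper bounds separately, each via an explicit extremal construction and a matching bound on $\dsumo$ and $\dsumb$ respectively.

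For the \textbf{lower bound} $\liminf_{K\to\infty}\mu_K/\sqrt{K}\geq 1$, I would construct, for each $K$, a family of topologies $\balpha$ that generalizes the small-network examples in Figure \ref{fig:extremalnet}. The structure suggested by those figures is a multi-tier heterogeneous network: partition the $K$ users into tiers, where users in higher tiers have strong direct links and cause strong interference to all users in lower tiers, while lower-tier users do not interfere back (or interfere only weakly). With OPC, one can choose the power exponents $r_k$ in a graded fashion (roughly $r_k$ decreasing across tiers) so that every user still gets a positive GDoF share — the interference from higher tiers is ``powered down'' just enough at the source — yielding $\dsumo(\balpha)=\Theta(\sqrt{K})\cdot(\text{max direct strength})$ type scaling. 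With BPC, however, any active higher-tier user wipes out all lower-tier users it interferes with, so the best schedule activates only one tier (or a small number of users), giving $\dsumb(\balpha)$ on the order of the max direct link strength. Choosing the number of tiers and the strength parameters to be $\Theta(\sqrt{K})$ appropriately (e.g. $\sqrt{K}$ tiers of $\sqrt{K}$ users each, with geometrically scaled direct strengths) should yield a ratio approaching $\sqrt{K}$. The bound $\mu_K\geq\lfloor\sqrt K\rfloor$ claimed in the abstract presumably comes from the clean version of this construction. I would verify $\dsumb$ by appealing to the characterization of the BPC rate region and the cycle-bound structure of \cite{Geng_TIN}, arguing that no subset of active users can beat a single tier.

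For the \textbf{upper bound} $\limsup_{K\to\infty}\mu_K/\sqrt{K}\leq 5/2$, the key is a universal inequality $\dsumo(\balpha)\leq c\sqrt{K}\,\dsumb(\balpha)$ for all $\balpha\in\mathcal{A}_K$ with $c$ approaching $5/2$. I would start from the GDoF expression \eqref{eq:tinGDoF} and the fact that the optimal $\bbr$ for OPC lies in a polyhedron whose active-user GDoF values satisfy the cycle bounds of \cite{Geng_TIN}; in particular, along any cycle of active users the sum of their GDoF is bounded by the sum of (direct strength minus cross strength) terms around the cycle. The strategy is: given the OPC-optimal solution, identify the set of active users, and show one can extract a subset — of size roughly $\sqrt{|\text{active set}|}$ — that (i) forms a ``nearly interference-free'' sub-network when the rest are switched off, so it is feasible under BPC, and (ii) retains at least a $1/(c\sqrt K)$ fraction of the total OPC GDoF. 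Sorting active users by direct-link strength and a greedy/pigeonhole selection (keep users whose direct strength dominates the incoming interference from already-selected users) is the natural mechanism; the constant $5/2$ would emerge from optimizing the trade-off between how many users we keep and how much GDoF each contributes, together with a dyadic bucketing of direct-link strengths.

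The \textbf{main obstacle} I anticipate is the upper bound — specifically, controlling the loss incurred when passing from the OPC-optimal power vector (which may spread GDoF thinly across all $K$ users via carefully graded powers) to a binary schedule. The difficulty is that switching users on/off is an all-or-nothing operation, so a user that contributed a small but positive GDoF under OPC, sustained by everyone else being slightly powered down, may contribute nothing (or may kill others) under BPC. One needs a combinatorial extraction lemma showing that the ``interference graph'' restricted to active users always contains a large independent-ish set whose internal interference can be tolerated, and then a careful accounting (via the cycle bounds) that the discarded users' GDoF was at most $O(\sqrt K)$ times what the kept users retain. Getting the sharp constant $5/2$ rather than merely some constant will require the dyadic bucketing argument to be executed with care; I expect the crude version of the argument gives some larger constant, and tightening it to $5/2$ is where most of the technical work lies.
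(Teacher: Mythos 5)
Your lower-bound construction is essentially the paper's: $\sqrt{K}$ groups of $\sqrt{K}$ users with graded direct strengths, higher groups drowning out lower ones under full power but coexisting under graded power-down, giving $\dsumo=K$ versus $\dsumb=\sqrt{K}$ for $K=m^2$ (Lemma \ref{lemma:K^2bounds}, Figure \ref{fig:extremalK}). That half of your plan is sound, modulo the routine verification of $\dsumb$ by case analysis (no cycle-bound machinery is needed for it).

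The upper bound is where there is a genuine gap. Your plan is to extract, from the OPC-active users, a subset of size roughly $\sqrt{K}$ that is ``nearly interference-free'' under full power and retains a $1/(c\sqrt{K})$ fraction of $\dsumo$. Two problems. First, such a subset need not exist: in the extremal network itself, every pair of users interferes (within a group, cross strength $g-1$ against direct strength $g$; across groups, the higher group saturates the lower), so any near-independent set has $O(1)$ users; the BPC optimum there activates an entire mutually-interfering group and survives only because the residual per-user GDoF happens to be positive, not because interference is absent. Second, and more fundamentally, you have no quantitative bridge from ``user $i$ had GDoF $d_i(\balpha,\bbr)$ under the graded OPC powers'' to ``user $i$ retains something comparable when a chosen subset is switched to full power.'' This is exactly what the paper's Lemma \ref{lemma:bo} supplies: it first \emph{modifies the topology} by raising every cross link inside the chosen subset $\mathcal{S}$ until, under the OPC powers, all interferers at each receiver are as strong as the strongest one (this preserves $\dsumo$ and can only decrease $\dsumb$), and then shows that activating all of $\mathcal{S}$ at full power costs user $\beta_i$ exactly the spread $\tilde r_{\beta_i}-\min_{j\in\mathcal{S},j\neq i}\tilde r_{\beta_j}$ in GDoF. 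The whole converse then consists of sorting users by power exponent and taking a weighted sum of $\Theta(K)$ such bounds --- sliding windows of $m=\sqrt{K}$ consecutive users, prefixes, and singletons --- chosen so that each $d_j$ appears with total weight $m$ while the power-spread penalties telescope into terms of the correct sign; this is what produces $\tfrac32 m-\tfrac12$ and hence $5/2$. Your dyadic-bucketing/greedy-selection mechanism replaces this with a single subset and provides no substitute for the penalty accounting, so as written it does not yield $O(\sqrt{K})$, let alone the constant. You would also need the reductions in Lemmas \ref{lemma:increase} and \ref{lemma:mubound} (restriction to $\mathcal{A}_K^+$ and induction on $K$) to pass from perfect squares to all $K$, which your outline omits.
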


The proof of Theorem \ref{thm:OsqrtK} is provided in Section \ref{sec:proofSqrtK}. 

\begin{enumerate}[wide, labelindent=1em ,labelwidth=!, labelsep*=1em, leftmargin =0em, style = sameline , label=\it Observation(\it\arabic*)]
	\setcounter{enumi}{3}
	\item The proof of Theorem \ref{thm:OsqrtK} reveals the bounds $\lfloor\sqrt{K}\rfloor \leq \mu_K \leq 2.5\sqrt{K}$ for all $K\in\mathbb{N}$. Notably, these bounds are not only valid for large $K$, rather they hold for all $K$. The bounds can be further tightened to  within a factor of $1.5$ of each other when $K$ is a perfect square.
	
	\item  An extremal network for $K=m^2$ users is shown in Figure \ref{fig:extremalK} and has a hierarchical structure, with $m$ groups, each of which is comprised of $m$ users. The strength ($\alpha_{ij}$ values) of the interfering links emanating from the $g^{th}$ group of users, $g\in[m]$, is equal to $g-1$, as experienced by users in groups $1,2,\cdots, g$, while users in groups $g+1,\cdots, m$ see no interference from users in group $g, g-1, \cdots, 1$. The desired links of users in group $g$  all have strength $\alpha_{ii}=g$.
	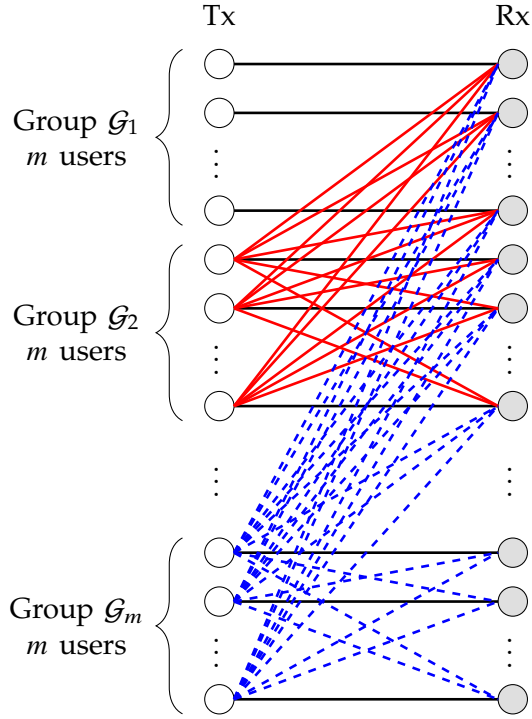
\begin{figure}[!t]
		\centering
		\begin{tikzpicture}[scale=1.3]
			\def \r {0.15}
			\def \w {3}
			\def \d {0.5}
			
			\node (T1) at (0, {-1 * 0 * \d}) {};
			\node (T2) at (0, {-1 * 1 * \d}) {};
			\node (Tdot1) at (0, {-1 * 2 * \d}) {};
			\node (T3) at (0, {-1 * 3 * \d}) {};
			\node (T4) at (0, {-1 * 4 * \d}) {};
			\node (T5) at (0, {-1 * 5 * \d}) {};
			\node (Tdot2) at (0, {-1 * 6 * \d}) {};
			\node (T6) at (0, {-1 * 7 * \d}) {};
			\node (Tdot3) at (0, {-1 * 8.5 * \d}) {};
			\node (T7) at (0, {-1 * 10 * \d}) {};
			\node (T8) at (0, {-1 * 11 * \d}) {};
			\node (Tdot4) at (0, {-1 * 12 * \d}) {};
			\node (T9) at (0, {-1 * 13 * \d}) {};
			
			\node (R1) at (\w, {-1 * 0 * \d}) {};
			\node (R2) at (\w, {-1 * 1 * \d}) {};
			\node (Rdot1) at (\w, {-1 * 2 * \d}) {};
			\node (R3) at (\w, {-1 * 3 * \d}) {};
			\node (R4) at (\w, {-1 * 4 * \d}) {};
			\node (R5) at (\w, {-1 * 5 * \d}) {};
			\node (Rdot2) at (\w, {-1 * 6 * \d}) {};
			\node (R6) at (\w, {-1 * 7 * \d}) {};
			\node (Rdot3) at (\w, {-1 * 8.5 * \d}) {};
			\node (R7) at (\w, {-1 * 10 * \d}) {};
			\node (R8) at (\w, {-1 * 11 * \d}) {};
			\node (Rdot4) at (\w, {-1 * 12 * \d}) {};
			\node (R9) at (\w, {-1 * 13 * \d}) {};

			\node (Tx) at ($(T1) + (0, \d)$) {Tx};
			\node (Rx) at ($(R1) + (0, \d)$) {Rx};

			\foreach \a in {1,2,3,4,5,6,7,8,9}{
				\draw (T\a) circle (\r);
				\draw [fill = gray!25] (R\a) circle (\r);
			}
			\foreach \a in {1,2,3,4}{
				\draw (Tdot\a) node [rotate = 90] {$\cdots$};
				\draw (Rdot\a) node [rotate = 90] {$\cdots$};
			}
			
			\draw [decorate,decoration={brace,amplitude=10pt, mirror},xshift=-15,yshift=0pt] ({\r},\r) -- ({\r},{-3*\d-\r}) node [black,midway,xshift=-40, align=center] {Group $\mathcal{G}_1$ \\ $m$ users};
			\draw [decorate,decoration={brace,amplitude=10pt, mirror},xshift=-15,yshift=0pt] ({\r},{-4*\d+\r}) -- ({\r},{-7*\d-\r}) node [black,midway,xshift=-40, align=center] {Group $\mathcal{G}_2$ \\ $m$ users};
			\draw [decorate,decoration={brace,amplitude=10pt, mirror},xshift=-15,yshift=0pt] ({\r},{-10*\d+\r}) -- ({\r},{-13*\d-\r}) node [black,midway,xshift=-40, align=center] {Group $\mathcal{G}_m$ \\ $m$ users};
			\foreach \a in {1,2,3,4,5,6,7,8,9}{
				\draw [line width = 1, black]($(T\a)+(\r,0)$)--($(R\a)-(\r,0)$);
			}
			\foreach \a in {1,2,3,5,6}{
				\draw [line width = 1, red]($(T4)+(\r,0)$)--($(R\a)-(\r,0)$);
			}
			\foreach \a in {1,2,3,4,6}{
				\draw [line width = 1, red]($(T5)+(\r,0)$)--($(R\a)-(\r,0)$);
			}
			\foreach \a in {1,2,3,4,5}{
				\draw [line width = 1, red]($(T6)+(\r,0)$)--($(R\a)-(\r,0)$);
			}
			\foreach \a in {1,2,3,4,5,6,8,9}{
				\draw [line width = 1, blue, dashed] ($(T7)+(\r,0)$)--($(R\a)-(\r,0)$);
			}
			\foreach \a in {1,2,3,4,5,6,7,9}{
				\draw [line width = 1, blue, dashed] ($(T8)+(\r,0)$)--($(R\a)-(\r,0)$);
			}
			\foreach \a in {1,2,3,4,5,6,7,8}{
				\draw [line width = 1, blue, dashed] ($(T9)+(\r,0)$)--($(R\a)-(\r,0)$);
			}
			
		\end{tikzpicture}
		\caption{\small \it An interference network with $K=m^2$ users, that achieves sum-GDoF gain of $m=\sqrt{K}$ from optimal over binary power control.}\label{fig:extremalK}
	\end{figure}
	
\end{enumerate}

\section{Key Lemmas}

Define a subset of all possible topologies $\mathcal{A}_K$ as,
\begin{align}
	\mathcal{A}_K^+ &\triangleq \left\{ \balpha:~  \left(\exists \bbr\in\Omega^K \text{ s.t. }  \dsumo (\balpha) = \sum_{i=1}^{K} d_i(\balpha,\bbr)\right) \Rightarrow \left(\min_{i\in[K]}d_{i}(\balpha,\bbr)>0\right) \right\},
\end{align}
which  collects the $K$ user network topologies whose  sum GDoF under OPC, $\dsumo(\balpha)$ can be achieved $\emph{only}$ by GDoF tuples with non-zero components. 
For example, the topology $\balpha = [\alpha_{11},\alpha_{12}; \alpha_{21},\alpha_{22}] = [1, 0.25; 0.25, 1]$ for $K=2$ users, lies in $\mathcal{A}_2^+$, because its sum-GDoF value with OPC, $\dsumo(\balpha) = 1.5$ can be achieved only with GDoF tuples $(d_1,d_2) = (1, 0.5)$ or $(0.5, 1)$, or any tuple lying on the line segment connecting these two tuples. All of these tuples have non-zero components.
On the other hand, the topology  $\balpha = [1, 0.5; 0.5, 1]$ does not lie in $\mathcal{A}_2^+$, because its sum GDoF, $\dsumo(\balpha) = 1$, can be achieved by the tuple $(d_1, d_2) = (1,0)$.

Before proceeding to the proof of Theorem \ref{thm:2-6usr} and Theorem \ref{thm:OsqrtK}, we first present some observations and a key lemma that we will need for the converse arguments.  As our first observation, the extremal gain found within $\mathcal{A}_K^+$ is non-decreasing in $K$.

\begin{lemma}\label{lemma:increase}
	For $K \in \mathbb{N}$,
	\begin{align}
		\sup_{\balpha \in \mathcal{A}_K^+} \frac{\dsumo(\balpha)}{\dsumb(\balpha)} \leq \sup_{\balpha' \in \mathcal{A}_{K+1}^+} \frac{\dsumo(\balpha')}{\dsumb(\balpha')}.\label{eq:plusplus}
	\end{align}
\end{lemma}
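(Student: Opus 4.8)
The plan is to show that any topology in $\mathcal{A}_K^+$ can be embedded, up to an arbitrarily small perturbation of the ratio, into a topology in $\mathcal{A}_{K+1}^+$. Concretely, fix $\balpha\in\mathcal{A}_K^+$ and a parameter $c>0$, and define a $(K+1)$-user topology $\balpha'$ by keeping the $[K]\times[K]$ block equal to $\balpha$, setting $\alpha'_{K+1,K+1}=c$, and setting every new cross link to zero, $\alpha'_{K+1,j}=\alpha'_{j,K+1}=0$ for $j\in[K]$, so that User $K+1$ is completely decoupled from the rest of the network.

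First I would establish the decoupling at the GDoF level. Since $\alpha'_{i,K+1}=0$ and all power exponents are nonpositive, the interference term from User $K+1$ at any Receiver $i\in[K]$ contributes $0$ to the inner maximum in \eqref{eq:tinGDoF}, hence $d_i(\balpha',\bbr')=d_i(\balpha,\bbr'_{[K]})$ for every $i\in[K]$ and every $\bbr'\in\Omega^{K+1}$; likewise $d_{K+1}(\balpha',\bbr')=(c+r'_{K+1})^+$. Therefore $D_\Sigma(\balpha',\bbr')=D_\Sigma(\balpha,\bbr'_{[K]})+(c+r'_{K+1})^+$ splits into two independent maximizations, which gives $\dsumo(\balpha')=\dsumo(\balpha)+c$ and, restricting to binary exponents, $\dsumb(\balpha')=\dsumb(\balpha)+c$.

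The key step, and the one I expect to require the most care, is verifying $\balpha'\in\mathcal{A}_{K+1}^+$. Suppose $\bbr'\in\Omega^{K+1}$ achieves $\dsumo(\balpha')$, i.e., $D_\Sigma(\balpha,\bbr'_{[K]})+(c+r'_{K+1})^+=\dsumo(\balpha)+c$. Since $D_\Sigma(\balpha,\bbr'_{[K]})\le\dsumo(\balpha)$ and $(c+r'_{K+1})^+\le c$, both inequalities must hold with equality; the second forces $r'_{K+1}=0$ (as $r'_{K+1}\le 0$), hence $d_{K+1}(\balpha',\bbr')=c>0$, and the first says $\bbr'_{[K]}$ is sum-GDoF-optimal for $\balpha$, so the defining property of $\mathcal{A}_K^+$ yields $d_i(\balpha',\bbr')=d_i(\balpha,\bbr'_{[K]})>0$ for all $i\in[K]$. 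Thus every GDoF component is strictly positive, so $\balpha'\in\mathcal{A}_{K+1}^+$.

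Finally I would close with a limiting argument. Note $\dsumb(\balpha)>0$ for $\balpha\in\mathcal{A}_K^+$: some $\alpha_{ii}>0$ (otherwise $\dsumo(\balpha)=0$ is attained by the all-off tuple, contradicting membership in $\mathcal{A}_K^+$), so $\dsumb(\balpha)\ge\max_i\alpha_{ii}>0$. From the construction, for every $c>0$,
\begin{align}
	\sup_{\balpha'\in\mathcal{A}_{K+1}^+}\frac{\dsumo(\balpha')}{\dsumb(\balpha')}\;\ge\;\frac{\dsumo(\balpha')}{\dsumb(\balpha')}\;=\;\frac{\dsumo(\balpha)+c}{\dsumb(\balpha)+c}.\nonumber
\end{align}
Letting $c\downarrow 0$, the right-hand side tends to $\dsumo(\balpha)/\dsumb(\balpha)$, so $\sup_{\balpha'\in\mathcal{A}_{K+1}^+}\dsumo(\balpha')/\dsumb(\balpha')\ge\dsumo(\balpha)/\dsumb(\balpha)$; taking the supremum over $\balpha\in\mathcal{A}_K^+$ gives \eqref{eq:plusplus}.
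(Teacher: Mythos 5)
Your proposal is correct and follows essentially the same route as the paper: append a decoupled $(K+1)$-th user with an arbitrarily small direct-link strength, observe that both $\dsumo$ and $\dsumb$ increase by exactly that amount, and let the perturbation vanish. Your write-up simply fills in the details the paper leaves implicit (the verification that the augmented topology stays in $\mathcal{A}_{K+1}^+$ and that $\dsumb(\balpha)>0$), which is a welcome addition but not a different argument.
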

\begin{proof} The proof is straightforward because for any $K$ user topology $\balpha\in\mathcal{A}_K^+$, we can add a $(K+1)^{th}$ user who neither causes nor suffers interference from any of the original $K$ users $(\alpha_{i,K+1}=\alpha_{K+1,i}=0,\forall i\in[K])$, and has a desired channel strength $\alpha_{K+1,K+1}=\epsilon>0$, to obtain a $K+1$ user topology, $\balpha'\in\mathcal{A}_{K+1}^+$. Since this $(K+1)^{th}$ user must contribute $\epsilon>0$ to the sum-GDoF optimal solution with either OPC or BPC, $\dsumo(\balpha')=\dsumo(\balpha)+\epsilon$, and $\dsumb(\balpha')=\dsumb(\balpha)+\epsilon$. Since $\epsilon$ can be chosen to be  arbitrarily small, the RHS of \eqref{eq:plusplus} cannot be smaller than the LHS. 
\end{proof}

The next lemma bounds the extremal GDoF gain from above by two upper bounds in two complementary topology subsets.

\begin{lemma} \label{lemma:mubound}
	For $K\geq 2$,
	\begin{align}
		\mu_K \leq \max \left\{ \mu_{K-1}, \sup_{\balpha \in \mathcal{A}_K^+} \frac{\dsumo(\balpha)}{\dsumb(\balpha)}  \right\}. \label{eq:mubound}
	\end{align}
\end{lemma}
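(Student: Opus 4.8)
The plan is to split the topology space $\mathcal{A}_K$ into $\mathcal{A}_K^+$ and its complement, and to control the ratio $\dsumo(\balpha)/\dsumb(\balpha)$ on each piece. On $\mathcal{A}_K^+$ nothing needs to be done: by definition of the supremum, every $\balpha\in\mathcal{A}_K^+$ satisfies $\dsumo(\balpha)/\dsumb(\balpha)\le \sup_{\balpha'\in\mathcal{A}_K^+}\dsumo(\balpha')/\dsumb(\balpha')$, which is the second argument of the maximum. So the substance of the lemma is the claim that every $\balpha\in\mathcal{A}_K\setminus\mathcal{A}_K^+$ has $\dsumo(\balpha)/\dsumb(\balpha)\le \mu_{K-1}$.

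To prove that, fix $\balpha\in\mathcal{A}_K\setminus\mathcal{A}_K^+$. By the definition of $\mathcal{A}_K^+$, there is an OPC-optimal power vector $\bbr^*\in\Omega^K$ --- i.e. $\dsumo(\balpha)=\sum_{i\in[K]}d_i(\balpha,\bbr^*)$ --- for which some component vanishes, say $d_j(\balpha,\bbr^*)=0$. I would then delete User $j$: let $\balpha'\in\mathcal{A}_{K-1}$ be the topology obtained by striking row $j$ and column $j$ from $\balpha$. The crux is two monotonicity facts. (i) $\dsumo(\balpha')\ge\dsumo(\balpha)$: keep the components of $\bbr^*$ indexed by $[K]\setminus\{j\}$ as a feasible power vector for $\balpha'$; in the GDoF formula \eqref{eq:tinGDoF} the interference term of each surviving user is now a maximum over a strictly smaller index set, hence no larger, so $d_i(\balpha',\cdot)\ge d_i(\balpha,\bbr^*)$ for every $i\ne j$, and summing (using $d_j(\balpha,\bbr^*)=0$) gives the claim. (ii) $\dsumb(\balpha')\le\dsumb(\balpha)$: given any binary power vector attaining $\dsumb(\balpha')$, append $r_j=-\infty$ to get a binary power vector on $\balpha$; an inactive transmitter contributes $\{-\infty\}^+=0$ to every interference term and $d_j=0$, so the per-user GDoF of the surviving users is unchanged and the sum GDoF is preserved. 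Combining (i) and (ii), $\dsumo(\balpha)/\dsumb(\balpha)\le \dsumo(\balpha')/\dsumb(\balpha')\le\mu_{K-1}$, the last inequality being the definition of $\mu_{K-1}$ as a supremum over $\mathcal{A}_{K-1}$. Taking the supremum over $\balpha$ in the two pieces yields \eqref{eq:mubound}.

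I do not expect a genuine obstacle; the lemma is in essence the bookkeeping observation that ``the hard case for the extremal gain is when optimal power control is forced to activate all $K$ users,'' and the reduction above is the natural way to formalize it. The only parts requiring care are the two monotonicity claims, which must be read off carefully from \eqref{eq:tinGDoF} together with the $\{-\infty\}^+=0$ convention that makes a switched-off transmitter genuinely invisible to all interference terms. A trivial degenerate case --- $\balpha$ with all $\alpha_{ii}=0$, where $\dsumo(\balpha)=\dsumb(\balpha)=0$ and the ratio is vacuous --- can be excluded up front, as it contributes nothing to the supremum defining $\mu_K$.
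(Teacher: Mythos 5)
Your proposal is correct and follows essentially the same route as the paper: split $\mathcal{A}_K$ into $\mathcal{A}_K^+$ and its complement, and on the complement delete a user whose GDoF vanishes at an OPC-optimal allocation, using exactly the two monotonicity facts (deletion cannot decrease the OPC sum GDoF, and cannot increase the BPC sum GDoF since switching that user off was already a feasible binary choice). Your write-up actually spells out the interference-term justification for these monotonicity claims more explicitly than the paper does, but the argument is the same.
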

\begin{proof} If $\mu_K=\sup_{\balpha \in \mathcal{A}_K^+} \frac{\dsumo(\balpha)}{\dsumb(\balpha)} $ then \eqref{eq:mubound} holds. So let us assume the alternative, i.e., $$\mu_K=\sup_{\balpha \in \mathcal{A}_K\setminus \mathcal{A}_K^+} \frac{\dsumo(\balpha)}{\dsumb(\balpha)}.$$ Consider any $K$ user topology $\tilde\balpha_K \in \mathcal{A}_K\setminus \mathcal{A}_K^+$. By definition, the sum-GDoF value $\dsumo(\tilde\balpha_K)$ must be achieved with  $d_i(\tilde\balpha_K,\bbr)=0$ for some $i\in[K]$. Removing this user $i$, we obtain a $K-1$ user topology $\tilde\balpha_{K-1}$. Since User $i$ contributed nothing to the sum-GDoF under OPC, removing this user cannot hurt, i.e., $\dsumo(\tilde\balpha_{K-1})=\dsumo(\tilde\balpha_{K})$. On the other hand, removing this user cannot help increase the sum-GDoF with BPC, because switching off this user was already an option in the original $K$ user topology under BPC. Therefore, $\dsumb(\tilde\balpha_{K-1})\leq \dsumb(\tilde\balpha_{K})$, and we have,
\begin{align}
	\frac{\dsumo(\tilde{\balpha}_K)}{\dsumb(\tilde{\balpha}_K)} \leq \frac{\dsumo(\tilde{\balpha}_{K-1})}{\dsumb(\tilde{\balpha}_{K-1})} \leq \mu_{K-1}.
\end{align}
Since this holds for all topologies $\tilde\balpha_K \in \mathcal{A}_K\setminus \mathcal{A}_K^+$, we have the bound $\mu_K\leq \mu_{K-1}$, which concludes the proof. 
\end{proof}

\begin{lemma} \label{lemma:pwr}
	For any topology $\balpha$ and any power allocation vector $\bbr=(r_1,r_2,\cdots,r_K)\in\Omega^K$, with $r_{\max}\triangleq \max_{k\in[K]}r_k$, 
	\begin{align}
		d_k(\balpha,\bbr)&=d_k(\balpha,\bbr'),&&\forall k\in[K],\label{eq:gdofhold}
	\end{align}
	where the new power allocation vector $\bbr'=(r'_1,r'_2,\cdots, r'_K)\in\Omega^K$ and 
	\begin{align}
		r'_k&=r_k-r_{\max},&&\forall k\in[K].
	\end{align}
\end{lemma}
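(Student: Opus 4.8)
The plan is to show that shifting all power allocation exponents by the same constant $-r_{\max}$ leaves every user's GDoF unchanged, which is essentially an invariance property of the formula \eqref{eq:tinGDoF}. First I would observe that $\bbr'\in\Omega^K$ is well-defined: since $r_{\max}=\max_k r_k \le 0$, we have $r'_k = r_k - r_{\max} \le 0$ for all $k$ (indeed $r'_k \le -r_{\max} \le 0$ whenever... more simply, $r_k \le r_{\max}$ forces $r'_k \le 0$), so the new vector is a legitimate power allocation. Then the heart of the argument is to plug $\bbr'$ into the closed-form expression
\begin{align}
d_k(\balpha,\bbr') = \left(\alpha_{kk} + r'_k - \max_{i \in [K]\setminus\{k\}} \{ \alpha_{ki} + r'_i \}^+ \right)^+. \nonumber
\end{align}

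Next I would substitute $r'_k = r_k - r_{\max}$ and $r'_i = r_i - r_{\max}$ and track how the constant $-r_{\max}$ propagates. The subtlety is the inner $\{\cdot\}^+$ operation: $\{\alpha_{ki} + r_i - r_{\max}\}^+$ is \emph{not} in general equal to $\{\alpha_{ki}+r_i\}^+ - r_{\max}$, because the $(\cdot)^+$ truncation does not commute with subtracting a constant. So the clean "just factor out the constant" move fails at first glance. The way around this is to argue that the inner truncation is actually inactive for the maximizing index whenever it matters. Concretely, the key step is: there exists at least one index $i^\star \ne k$... no — more carefully, I would note that $\max_{i\ne k}\{\alpha_{ki}+r_i\}^+ \ge 0$ always (the max of nonnegative quantities, assuming $K\ge 2$; and for $K=1$ the interference term is vacuous and the claim is trivial after separately checking). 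I need the case analysis below.

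The main obstacle, and where I'd spend the most care, is handling the interaction between the two nested $(\cdot)^+$ operations under the shift. I would split into cases according to whether $\alpha_{kk}+r_k \le \max_{i\ne k}\{\alpha_{ki}+r_i\}^+$ or not, i.e., whether $d_k = 0$ or $d_k > 0$ under the original $\bbr$. In the case $d_k(\balpha,\bbr) > 0$, the desired signal strength $\alpha_{kk}+r_k$ strictly exceeds the interference floor, which is $\ge 0$; after shifting, $\alpha_{kk}+r'_k = \alpha_{kk}+r_k - r_{\max}$ and I would show $\max_{i\ne k}\{\alpha_{ki}+r'_i\}^+ = \max_{i\ne k}\{\alpha_{ki}+r_i\}^+ - r_{\max}$ provided the maximizing term had $\alpha_{ki}+r_i \ge 0$ — and one such $i$ can always be chosen since the max over nonnegative-clamped terms, if positive, is attained at a term that is itself nonnegative, while if the max equals $0$ one checks directly that subtracting $-r_{\max}\ge 0$ pushes it to $-r_{\max}$, which is still attained (taking $\alpha_{ki}+r_i' $ for the same $i$, now $= \alpha_{ki}+r_i - r_{\max} \ge -r_{\max} \ge$ ... ) — so the whole expression inside the outer $(\cdot)^+$ shifts by exactly $-r_{\max} - (-r_{\max}) = 0$, giving $d_k(\balpha,\bbr') = d_k(\balpha,\bbr)$. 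In the case $d_k(\balpha,\bbr)=0$, I'd show $d_k(\balpha,\bbr')=0$ as well by checking the desired term still does not exceed the (shifted, hence not-smaller) interference floor: $\alpha_{kk}+r_k - r_{\max} \le \max_{i\ne k}\{\alpha_{ki}+r_i - r_{\max}\}^+$, which follows from $\alpha_{kk}+r_k \le \max_{i\ne k}\{\alpha_{ki}+r_i\}^+$ together with the monotonicity of $\{\cdot\}^+$ and of $\max$ under adding the common constant $-r_{\max}$ to the arguments (before clamping, adding a constant then clamping is monotone). Assembling the cases establishes \eqref{eq:gdofhold} for all $k\in[K]$, completing the proof.
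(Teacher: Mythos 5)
You correctly isolate the one genuinely delicate point --- that the inner clamp $\{\cdot\}^+$ does not commute with adding the constant $-r_{\max}\geq 0$ --- but your resolution of that point is where the argument breaks. In the sub-case where $\max_{i\neq k}\{\alpha_{ki}+r_i\}^+=0$, i.e.\ every interference term at Receiver $k$ sits at or below the noise floor, you assert that the shifted maximum equals $-r_{\max}$, justified by ``$\alpha_{ki}+r_i'=\alpha_{ki}+r_i-r_{\max}\geq -r_{\max}$''; the inequality goes the wrong way (here $\alpha_{ki}+r_i\leq 0$, so $\alpha_{ki}+r_i'\leq -r_{\max}$), and the shifted maximum can in fact be anything in $[0,-r_{\max}]$. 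The same clamp defeats the monotonicity step in your $d_k=0$ case, because the desired-signal term on the left of your inequality is not clamped while the interference floor on the right is. These are not repairable slips: the equality you are trying to prove is false as stated. Take $K=2$, $\alpha_{11}=1$, $\alpha_{12}=1/2$, $\bbr=(-1/2,-1)$; then $d_1(\balpha,\bbr)=(1/2-\{-1/2\}^+)^+=1/2$, while $\bbr'=(0,-1/2)$ gives $d_1(\balpha,\bbr')=(1-\{0\}^+)^+=1$. Physically, raising all transmit powers by $-r_{\max}$ raises the desired signal by $-r_{\max}$, but interference that was below the noise floor gains less than $-r_{\max}$ relative to the (fixed) noise, so such a user's SINR strictly improves.

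What your correct sub-cases actually establish is the one-sided bound $d_k(\balpha,\bbr')\geq d_k(\balpha,\bbr)$ for all $k$: since $\{x+c\}^+\leq\{x\}^+ +c$ for $c\geq 0$, the interference-plus-noise floor rises by at most $-r_{\max}$ while the desired signal rises by exactly $-r_{\max}$. That weaker statement is all the paper ever needs from Lemma \ref{lemma:pwr}: it is invoked only for an allocation $\tbbr$ that already achieves $\dsumo(\tbalpha)$, where the per-user inequalities must be tight because the sum cannot exceed the optimum, so the normalization $\tilde{r}_1=0$ is indeed without loss of generality. For comparison, the paper's own proof is the informal ``every receiver sees the same upward shift in its desired and interfering signals, so their differences are unaffected'' argument, which glosses over exactly the noise-floor clamp you flagged; your attempt is more honest about where the difficulty lies, but it cannot be completed as an equality, and the clean fix is to restate and prove the lemma as the inequality $d_k(\balpha,\bbr')\geq d_k(\balpha,\bbr)$.
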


\begin{proof}
	The transmit power levels ($r_k$, $k\in[K]$) of all users are elevated (e.g., as illustrated in Figure \ref{fig:pwr}) by the same amount ($-r_{\max}$), until at least one user hits its maximum transmit power level $(0)$. The elevated power allocation variables $r_i'$ are still valid $(\bbr'\in\Omega^K)$ because their values are not more than $0$. This can be seen as $r_i' = r_i - \max_{j \in [K]} r_j  \leq r_i - r_i = 0$. With the new power allocation variables $r_i'$, every receiver sees the same upward shift (increase by $-r_{\max}$) in its desired signal as well as all the interfering signals. The difference of power levels between desired signals and interference is therefore unaffected. As a result, the GDoF achieved by TIN are unchanged, i.e., \eqref{eq:gdofhold} holds.
\end{proof}

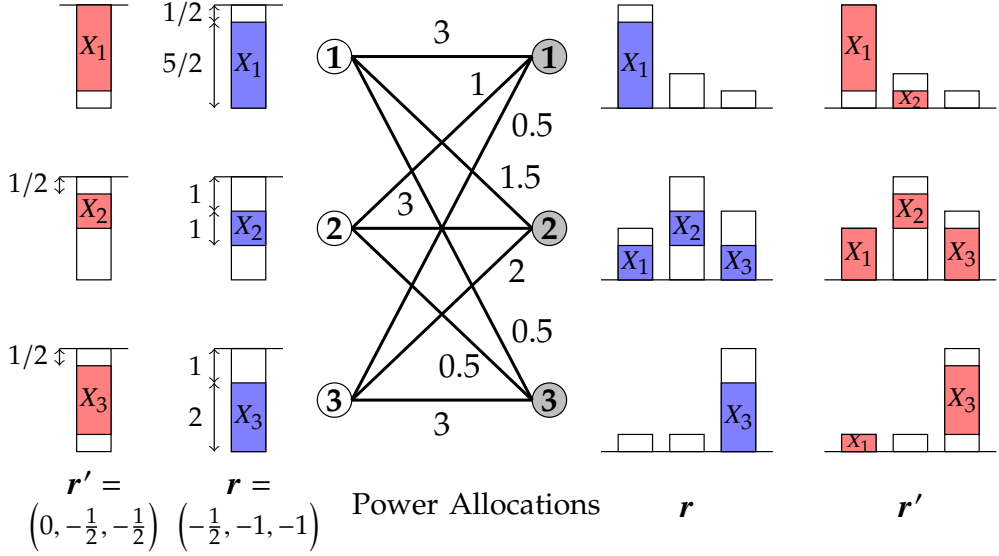
\begin{figure}[t!]
	\centering
	\resizebox{0.75\textwidth}{!}{
		\begin{tikzpicture}[scale=2]
			\def \r {0.1}
			\def \d {1}
			\def \w {1.25}
			\def \e {0.2}
			\def \h {0.6*\d}
			
			\foreach \v in {1,2,3}{
				\coordinate (Tx\v) at (0, {-1*\d*\v});
				\draw [fill=white] (Tx\v) circle (\r) node {$\boldsymbol{\v}$};
				\coordinate (Rx\v) at ({\w}, {-1*\d*\v});
				\draw [fill=gray!50] (Rx\v) circle (\r) node {$\boldsymbol{\v}$};
				\foreach \t in {1,2}{
					\coordinate (Tx\v-T\t) at ($(Tx\v) + ({\t*(-0.3) + (\t-1)*(-0.6)}, 0.3*\d)$);
					\coordinate (Rx\v-T\t) at ($(Rx\v) + ({\t*(0.3) + (\t-1)*(1.0)}, -0.3*\d)$);
				}
			}
			
			\draw [line width = 1] ($(Tx1) + (\r, 0)$) -- ($(Rx1) - (\r, 0)$) node [pos=0.5, above] {3};
			\draw [line width = 1] ($(Tx2) + (\r, 0)$) -- ($(Rx2) - (\r, 0)$) node [pos=0.3, above] {3};
			\draw [line width = 1] ($(Tx3) + (\r, 0)$) -- ($(Rx3) - (\r, 0)$) node [pos=0.5, below] {3};
			\draw [line width = 1] ($(Tx1) + (\r, 0)$) -- ($(Rx2) - (\r, 0)$) node [pos=0.7, above, right=3] {1.5};
			\draw [line width = 1] ($(Tx1) + (\r, 0)$) -- ($(Rx3) - (\r, 0)$) node [pos=0.8, above, right = 1] {0.5};
			\draw [line width = 1] ($(Tx2) + (\r, 0)$) -- ($(Rx1) - (\r, 0)$) node [pos=0.7, above] {1};
			\draw [line width = 1] ($(Tx2) + (\r, 0)$) -- ($(Rx3) - (\r, 0)$) node [pos=0.8, below, left = 1] {0.5};
			\draw [line width = 1] ($(Tx3) + (\r, 0)$) -- ($(Rx1) - (\r, 0)$) node [pos=0.8, below, right=1] {0.5};
			\draw [line width = 1] ($(Tx3) + (\r, 0)$) -- ($(Rx2) - (\r, 0)$) node [pos=0.75, below, right =3 ] {2};
			\node at ($  (Tx3)+({\w/2+0.2}, -0.6)$) {\small Power Allocations};

			\foreach \t in {1,2}{
				\foreach \v in {1,2,3}{
					\draw ($(Tx\v-T\t)$) -- ($(Tx\v-T\t) + (-0.4,0)$);
					\draw ($(Tx\v-T\t) + (-0.1,0)$) rectangle ($(Tx\v-T\t) + (-0.3, {-1*\h})$);
				}
			}
			\node at ($(Tx3-T1)+(-0.2, {-1*\h-0.65*\d})$) [align=center, anchor=south] { $\boldsymbol{r} = $\\[0ex] \footnotesize $\left (-\tfrac{1}{2}, -1, -1 \right)$};
			\node at ($(Tx3-T2)+(-0.2, {-1*\h-0.65*\d})$) [align=center, anchor=south] { $\boldsymbol{r}' = $\\[0ex] \footnotesize $\left(0, -\tfrac{1}{2}, -\tfrac{1}{2} \right)$};
			
			\draw [fill = blue!50]($(Tx1-T1)+(-0.1, {-1/6*\h})$) rectangle ($(Tx1-T1)+(-0.3, {-1*\h})$) node [pos=0.5] { \footnotesize $X_1$};
			\draw [fill = blue!50]($(Tx2-T1)+(-0.1, {-1/3*\h})$) rectangle ($(Tx2-T1)+(-0.3, {-2/3*\h})$) node [pos=0.5] {\footnotesize $X_2$};
			\draw [fill = blue!50]($(Tx3-T1)+(-0.1, {-1/3*\h})$) rectangle ($(Tx3-T1)+(-0.3, {-1*\h})$) node [pos=0.5] {\footnotesize $X_3$};
			\draw [fill = red!50]($(Tx1-T2)+(-0.1, {0*\h})$) rectangle ($(Tx1-T2)+(-0.3, {-5/6*\h})$) node [pos=0.5] { \footnotesize $X_1$};
			\draw [fill = red!50]($(Tx2-T2)+(-0.1, {-1/6*\h})$) rectangle ($(Tx2-T2)+(-0.3, {-1/2*\h})$) node [pos=0.5] {\footnotesize $X_2$};
			\draw [fill = red!50]($(Tx3-T2)+(-0.1, {-1/6*\h})$) rectangle ($(Tx3-T2)+(-0.3, {-5/6*\h})$) node [pos=0.5] { \footnotesize  $X_3$};

			\draw [<->] ($(Tx1-T1) + (-0.4, {0*\h})$) -- ($(Tx1-T1) + (-0.4, {-1/6*\h})$) node [left, pos=0.5] {\footnotesize$1/2$};
			\draw [<->] ($(Tx1-T1) + (-0.4, {-1/6*\h})$) -- ($(Tx1-T1) + (-0.4, {-1*\h})$) node [left, pos=0.5] {\footnotesize$5/2$};
			\draw [<->] ($(Tx2-T1) + (-0.4, {0*\h})$) -- ($(Tx2-T1) + (-0.4, {-1/3*\h})$) node [left, pos=0.5] {\footnotesize$1$};
			\draw [<->] ($(Tx2-T1) + (-0.4, {-1/3*\h})$) -- ($(Tx2-T1) + (-0.4, {-2/3*\h})$) node [left, pos=0.5] {\footnotesize$1$};
			\draw [<->] ($(Tx3-T1) + (-0.4, {0*\h})$) -- ($(Tx3-T1) + (-0.4, {-1/3*\h})$) node [left, pos=0.5] {\footnotesize$1$};
			\draw [<->] ($(Tx3-T1) + (-0.4, {-1/3*\h})$) -- ($(Tx3-T1) + (-0.4, {-1*\h})$) node [left, pos=0.5] {\footnotesize$2$};
			\draw [<->] ($(Tx2-T2) + (-0.4, {0*\h})$) -- ($(Tx2-T2) + (-0.4, {-1/6*\h})$) node [left, pos=0.5] {\footnotesize$1/2$};
			\draw [<->] ($(Tx3-T2) + (-0.4, {0*\h})$) -- ($(Tx3-T2) + (-0.4, {-1/6*\h})$) node [left, pos=0.5] {\footnotesize$1/2$};

			\foreach \t in {1,2}{
				\foreach \v in {1,2,3}{
					\draw ($(Rx\v-T\t) + (0, 0)$) -- ($(Rx\v-T\t) + (1, 0 )$); 
				}
			}
			\node at ($(Rx3-T1) + (0.5, {-0.45*\d})$) [anchor = south] {$\boldsymbol{r}$};
			\node at ($(Rx3-T2) + (0.5, {-0.45*\d})$) [anchor = south] {$\boldsymbol{r}'$};
			
			\foreach \t in {1,2}{
				\foreach \j in {1,2,3}{
					\coordinate (Rx1-T\t-\j) at ($(Rx1-T\t) + ({ \j*0.1+ (\j-1)*0.2}, 0)$);
				}
				\draw ($(Rx1-T\t-1)$) rectangle ($(Rx1-T\t-1) + (0.2, {1*\h})$);
				\draw ($(Rx1-T\t-2)$) rectangle ($(Rx1-T\t-2) + (0.2, {1/3*\h})$);
				\draw ($(Rx1-T\t-3)$) rectangle ($(Rx1-T\t-3) + (0.2, {1/6*\h})$);
				
			}
			\foreach \t in {1,2}{
				\foreach \j in {1,2,3}{
					\coordinate (Rx2-T\t-\j) at ($(Rx2-T\t) + ({ \j*0.1+ (\j-1)*0.2}, 0)$);
				}
				\draw ($(Rx2-T\t-1)$) rectangle ($(Rx2-T\t-1) + (0.2, {1/2*\h})$);
				\draw ($(Rx2-T\t-2)$) rectangle ($(Rx2-T\t-2) + (0.2, {1*\h})$);
				\draw ($(Rx2-T\t-3)$) rectangle ($(Rx2-T\t-3) + (0.2, {2/3*\h})$);
			}
			\foreach \t in {1,2}{
				\foreach \j in {1,2,3}{
					\coordinate (Rx3-T\t-\j) at ($(Rx3-T\t) + ({ \j*0.1+ (\j-1)*0.2}, 0)$);
				}
				\draw ($(Rx3-T\t-1)$) rectangle ($(Rx3-T\t-1) + (0.2, {1/6*\h})$);
				\draw ($(Rx3-T\t-2)$) rectangle ($(Rx3-T\t-2) + (0.2, {1/6*\h})$);
				\draw ($(Rx3-T\t-3)$) rectangle ($(Rx3-T\t-3) + (0.2, {1*\h})$);
			}
			
			\draw [fill = blue!50] ($(Rx1-T1-1) + (0, {0*\h})$) rectangle ($(Rx1-T1-1) + (0.2, {5/6*\h})$) node [pos=0.5] {\footnotesize$X_1$};
			
			\draw [fill = blue!50] ($(Rx2-T1-1) + (0, {0*\h})$) rectangle ($(Rx2-T1-1) + (0.2, {1/3*\h})$) node [pos=0.5] {\footnotesize$X_1$};
			\draw [fill = blue!50] ($(Rx2-T1-2) + (0, {1/3*\h})$) rectangle ($(Rx2-T1-2) + (0.2, {2/3*\h})$) node [pos=0.5] {\footnotesize$X_2$};
			\draw [fill = blue!50] ($(Rx2-T1-3) + (0, {0*\h})$) rectangle ($(Rx2-T1-3) + (0.2, {1/3*\h})$) node [pos=0.5] {\footnotesize $X_3$};

			\draw [fill = blue!50] ($(Rx3-T1-3) + (0, {0*\h})$) rectangle ($(Rx3-T1-3) + (0.2, {2/3*\h})$) node [pos=0.5] {\footnotesize $X_3$};

			\draw [fill = red!50] ($(Rx1-T2-1) + (0, {1*\h})$) rectangle ($(Rx1-T2-1) + (0.2, {1/6*\h})$) node [pos=0.5] {\footnotesize$X_1$};
			\draw [fill = red!50] ($(Rx1-T2-2) + (0, {0*\h})$) rectangle ($(Rx1-T2-2) + (0.2, {1/6*\h})$) node [pos=0.5] {\tiny$X_2$};

			\draw [fill = red!50] ($(Rx2-T2-1) + (0, {0*\h})$) rectangle ($(Rx2-T2-1) + (0.2, {1/2*\h})$) node [pos=0.5] {\footnotesize $X_1$};
			\draw [fill = red!50] ($(Rx2-T2-2) + (0, {1/2*\h})$) rectangle ($(Rx2-T2-2) + (0.2, {5/6*\h})$) node [pos=0.5] {\footnotesize$X_2$};
			\draw [fill = red!50] ($(Rx2-T2-3) + (0, {0*\h})$) rectangle ($(Rx2-T2-3) + (0.2, {1/2*\h})$) node [pos=0.5] {\footnotesize $X_3$};

			\draw [fill = red!50] ($(Rx3-T2-1) + (0, {0*\h})$) rectangle ($(Rx3-T2-1) + (0.2, {1/6*\h})$) node [pos=0.5] {\tiny$X_1$};
			\draw [fill = red!50] ($(Rx3-T2-3) + (0, {1/6*\h})$) rectangle ($(Rx3-T2-3) + (0.2, {5/6*\h})$) node [pos=0.5] {\footnotesize $X_3$};

		\end{tikzpicture}
	}
	\caption{\small \it An illustration of Lemma \ref{lemma:pwr}.  The power allocation  $\boldsymbol{r}'$ is obtained from $\boldsymbol{r}$ by raising the power levels at all inputs by $-r_{\max}=0.5$. Both allocations achieve the same GDoF tuple $(d_1,d_2,d_3)=(2.5,1,2)$.}
	\label{fig:pwr}
\end{figure}

Although Lemma \ref{lemma:mubound} gives an upper bound for the extremal GDoF gain, it remains to explicitly bound the sum-GDoF gain in a topology subset $\mathcal{A}_K^+$. 
The following lemma serves as a key building block for bounding such a gain.

\begin{lemma} \label{lemma:bo}
	Consider any network topology  $\tilde\balpha = [\tilde{\alpha}_{ij}]_{i \in [K], j \in [K]} \in\mathcal{A}_K^+$, and any power allocation vector $\tilde\bbr=(\tilde{r}_1,\tilde{r}_2,\cdots,\tilde{r}_K)\in\Omega^K$ 
	that achieves $\dsumo(\tilde\balpha)$, i.e.,
	\begin{align}
		\dsumo(\tilde\balpha)&=\sum_{k=1}^K d_k(\tilde\balpha,\tilde\bbr).
	\end{align}
	For any $\mathcal{S}=\{\beta_1, \beta_2, \cdots, \beta_L\} \subseteq [K]$ satisfying $\tilde{r}_{\beta_1} \geq \tilde{r}_{\beta_2} \geq \cdots \geq \tilde{r}_{\beta_L}$, we have 
	\begin{align} 
		\dsumb(\tilde\balpha) &\geq \sum_{i=1}^{L-1} \left( d_{\beta_i}(\tilde\balpha,\tilde\bbr)  + \tilde{r}_{\beta_{L}} - \tilde{r}_{\beta_{i}}   \right) + \left(d_{\beta_L}(\tilde\balpha,\tilde\bbr) + \tilde{r}_{\beta_{L-1}} - \tilde{r}_{\beta_{L}}  \right) ,&&\text{if } L>1,\label{eq:boundB}\\
		\dsumb(\tilde\balpha) &\geq d_{\beta_1}(\tilde\balpha, \tilde\bbr) - \tilde{r}_{\beta_1},&& \text{if } L=1. \label{eq:boundB-single}
	\end{align}
\end{lemma}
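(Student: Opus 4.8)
The plan is to construct an explicit BPC solution on a carefully chosen subset of users and lower bound its sum-GDoF. The key idea is that although BPC forbids fractional power levels, we can \emph{mimic} the power gaps in $\tilde\bbr$ on the subset $\mathcal{S}$ by \emph{switching off all other users} and exploiting the fact that (by Lemma \ref{lemma:pwr}) GDoF depend only on power \emph{differences}. Specifically, I would proceed as follows. First, restrict attention to the users in $\mathcal{S}$, switching off every user in $[K]\setminus\mathcal{S}$ (this is allowed under BPC). For the users in $\mathcal{S}$, the natural candidate is to keep $\beta_L$ (the one with the smallest $\tilde r$) on at full power $r_{\beta_L}=0$, and also turn the others on at full power $0$ — but this changes the interference gaps. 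The trick is to observe that in the original OPC solution, user $\beta_i$ sees interference from user $\beta_j$ with an effective power offset $\tilde r_{\beta_j}$; raising $\beta_j$ from $\tilde r_{\beta_j}$ to $0$ increases that interference by $-\tilde r_{\beta_j}\geq 0$, and raising the desired signal of $\beta_i$ from $\tilde r_{\beta_i}$ to $0$ increases it by $-\tilde r_{\beta_i}$. Tracking these shifts carefully — and using $\balpha\in\mathcal{A}_K^+$, which guarantees $d_k(\tilde\balpha,\tilde\bbr)>0$ for all $k$ so the outer $(\cdot)^+$ in \eqref{eq:tinGDoF} is inactive at the OPC point — yields that the GDoF of $\beta_i$ under the all-on BPC solution on $\mathcal{S}$ is at least $d_{\beta_i}(\tilde\balpha,\tilde\bbr) + \tilde r_{\beta_i} - \max_{j}(\text{relevant }\tilde r_{\beta_j})$, where the max is over the users in $\mathcal{S}\setminus\{\beta_i\}$ that actually interfere with $\beta_i$. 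Bounding that max by $\tilde r_{\beta_1}$ (the largest) for all $i<L$ gives the $i$-th summand; for $\beta_L$, since $\beta_L$ has the smallest power, the dominant interferer offset is at most $\tilde r_{\beta_{L-1}}$, which gives the last term. The $L=1$ case \eqref{eq:boundB-single} is the degenerate version: with only $\beta_1$ on (everyone else off), the desired signal shifts up by $-\tilde r_{\beta_1}$ and there is no interference from $\mathcal{S}$, so $d_{\beta_1}$ improves by exactly $-\tilde r_{\beta_1}$, though one must still argue that interference from switched-off users causes no harm (it does not, since those are off).

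The key technical step is the \emph{interference bookkeeping}: writing $d_{\beta_i}$ for the BPC-on-$\mathcal{S}$ configuration as
\begin{align}
	d_{\beta_i}^{\text{BPC}} = \left(\tilde\alpha_{\beta_i\beta_i} - \max_{j\in\mathcal{S}\setminus\{\beta_i\}}\{\tilde\alpha_{\beta_i\beta_j}\}^+\right)^+,
\end{align}
and comparing it termwise to
\begin{align}
	d_{\beta_i}(\tilde\balpha,\tilde\bbr) = \tilde\alpha_{\beta_i\beta_i} + \tilde r_{\beta_i} - \max_{k\in[K]\setminus\{\beta_i\}}\{\tilde\alpha_{\beta_i k}+\tilde r_k\}^+,
\end{align}
where the outer $(\cdot)^+$ is dropped by the $\mathcal{A}_K^+$ assumption. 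The inequality to establish is that dropping the non-$\mathcal{S}$ interferers and raising the $\mathcal{S}$-interferers to full power changes the argument by at most $\tilde r_{\beta_i} - \tilde r_{\beta_1}$ (or $\tilde r_{\beta_L}-\tilde r_{\beta_{L-1}}$ for the last user): the desired term gains $-\tilde r_{\beta_i}$, and the interference term, after removing users outside $\mathcal{S}$ (which can only decrease it) and raising the rest, is at most $\max_{j}\tilde\alpha_{\beta_i\beta_j}$, whereas before it was $\max_k\{\tilde\alpha_{\beta_i k}+\tilde r_k\}^+ \geq \max_{j\in\mathcal{S}\setminus\{\beta_i\}}\{\tilde\alpha_{\beta_i\beta_j}+\tilde r_{\beta_j}\}$; since all $\tilde r_{\beta_j}\geq \tilde r_{\beta_L}$ and in fact $\geq \tilde r_{\beta_{i+1}}$-type ordering bounds apply, one gets the stated slack. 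Summing over $i$ and recognizing that $\dsumb(\tilde\balpha)$ is at least the sum-GDoF of \emph{any} BPC configuration (in particular the one we built), we obtain \eqref{eq:boundB}.

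The main obstacle I anticipate is handling the $\max$ over interferers cleanly — in particular making sure the bound $\max_{j\in\mathcal{S}\setminus\{\beta_i\}}\{\tilde\alpha_{\beta_i\beta_j}\}^+ \leq \max_{k\in[K]\setminus\{\beta_i\}}\{\tilde\alpha_{\beta_i k}+\tilde r_k\}^+ - \tilde r_{\beta_1}$ (resp. $-\tilde r_{\beta_{L-1}}$ for $i=L$) holds even when the OPC-optimal interferer for $\beta_i$ lies \emph{outside} $\mathcal{S}$. This is where the ordering $\tilde r_{\beta_1}\geq\cdots\geq\tilde r_{\beta_L}$ is essential: for $i<L$ we crudely bound $-\tilde r_{\beta_j}\leq -\tilde r_{\beta_L}$ is not tight enough, so instead one uses that $\{\tilde\alpha_{\beta_i\beta_j}\}^+ \leq \{\tilde\alpha_{\beta_i\beta_j}+\tilde r_{\beta_j}\}^+ - \tilde r_{\beta_j} + \tilde r_{\beta_j}\cdot\mathbf{1}[\cdots]$; careful case analysis on whether the max is positive is needed. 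A secondary subtlety is that the constructed BPC solution might produce a \emph{negative} argument inside some $d_{\beta_i}^{\text{BPC}}$, in which case $d_{\beta_i}^{\text{BPC}}=0$; but the right-hand side of \eqref{eq:boundB} could also be negative for that term, and one checks the inequality still goes through because $\dsumb$ is being lower-bounded (negative summands on the RHS only weaken the claim). Once the per-user inequality is pinned down, the summation and the final invocation of the definition of $\dsumb$ as a maximum over $\Omega_b^K$ are routine.
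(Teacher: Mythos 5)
Your proposal is correct and follows the same essential strategy as the paper's proof: activate exactly the users in $\mathcal{S}$ at full power under BPC, and lower-bound each $d_{\beta_i}(\cdot,\bbr(\mathcal{S}))$ against $d_{\beta_i}(\tilde\balpha,\tilde\bbr)$ using the ordering of the $\tilde r_{\beta_j}$ and the fact that $\tilde\balpha\in\mathcal{A}_K^+$ deactivates the outer $(\cdot)^+$ at the OPC point. The one structural difference is that the paper routes the argument through an auxiliary topology $\balpha'$ in which every cross-link internal to $\mathcal{S}$ is raised so that, under $\tilde\bbr$, all interferers in $\mathcal{S}$ hit each receiver at the same level (this preserves $\dsumo$ while $\dsumb$ can only decrease), and only then evaluates $\bbr(\mathcal{S})$ in $\balpha'$; you evaluate $\bbr(\mathcal{S})$ directly in $\tilde\balpha$. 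The two routes yield literally the same per-user inequality $d_{\beta_i}(\cdot,\bbr(\mathcal{S}))\geq d_{\beta_i}(\tilde\balpha,\tilde\bbr)-\tilde r_{\beta_i}+\min_{j\in\mathcal{S}\setminus\{\beta_i\}}\tilde r_{\beta_j}$ via the same elementary steps ($\{a\}^+\leq\{a+\tilde r\}^+-\tilde r$ for $\tilde r\leq 0$, enlarging the interferer set from $\mathcal{S}$ to $[K]$, and $(x)^+\geq x$), so your direct version is, if anything, slightly leaner; the auxiliary topology buys intuition rather than anything logically necessary. Two slips in your prose, which your own displayed inequalities later correct: the intermediate per-user bound should read $d_{\beta_i}-\tilde r_{\beta_i}+\min_{j}\tilde r_{\beta_j}$ rather than $d_{\beta_i}+\tilde r_{\beta_i}-\max_{j}(\cdots)$, and the slack for $i<L$ is $-\tilde r_{\beta_L}$ (coming from the minimum over the other members of $\mathcal{S}$), not $-\tilde r_{\beta_1}$; the latter would assert more than is true and more than \eqref{eq:boundB} requires.
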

\begin{remark}For brevity, we will refer to the bounds (\ref{eq:boundB}), (\ref{eq:boundB-single}) as $B_{[ \beta_1, \beta_2, \cdots, \beta_L ]}$. 
\end{remark}
\begin{proof}
	First we note that, since $\tilde\balpha\in\mathcal{A}_K^+$, by (\ref{eq:tinGDoF})
	\begin{align} \label{eq:dio}
		{d}_{i}(\tilde\balpha,\tilde\bbr) = \tilde{\alpha}_{ii} + \tilde{r}_i - \max_{k \in [K], k\neq i} (\tilde{\alpha}_{ik}+ \tilde{r}_k)^+ > 0
	\end{align}	
 	for all $i\in[K]$.
	For any $\mathcal{S}\subseteq [K]$, define $\bbr(\mathcal{S})=(r_1(\mathcal{S}), r_2(\mathcal{S}), \cdots, r_K(\mathcal{S}))\in\Omega^K_b$ as the power allocation that sets $r_i(\mathcal{S})=0$ if $i\in\mathcal{S}$ and $r_i(\mathcal{S})=-\infty$ if $i\notin\mathcal{S}$. In other words, the power allocation $\bbr(\mathcal{S})$ amounts to switching on all transmitters in $\mathcal{S}$ at full power, and switching off all transmitters not in $\mathcal{S}$.
	We note that 
	\begin{align}
		\dsumb(\tilde\balpha) = \max_{\mathcal{S}\subseteq [K]} D_\Sigma(\tilde\balpha,\bbr(\mathcal{S})) \geq D_\Sigma(\tilde\balpha,\bbr(\mathcal{S})) \label{eq:dsumb_lbd}
	\end{align}
	for all $\mathcal{S}\subseteq [K].$
	If $L=1$, i.e., $\mathcal{S} = \{\beta_1\}  \subseteq [K]$, then the bound (\ref{eq:boundB-single}) is obtained from (\ref{eq:dsumb_lbd}) as follows:
	\begin{align}
		\dsumb(\tilde\balpha) &\geq D_\Sigma(\tilde\balpha, \bbr(\mathcal{S})) = \tilde\alpha_{\beta_1 \beta_1} \\
		&\geq \left(\tilde\alpha_{\beta_1 \beta_1} + \tilde{r}_{\beta_1} - \max_{k\in [K], k\neq \beta_1} (\tilde\alpha_{\beta_1 k} + \tilde{r}_k)^+ \right ) - \tilde{r}_{\beta_1} \\
		&= d_{\beta_1}(\tilde\balpha, \tilde\bbr) - \tilde{r}_{\beta_1}.
	\end{align}  
	
	Next we consider $L>1$, and prove the bound (\ref{eq:boundB}). We will create a new topology $\balpha'$ based on the given topology $\tilde\balpha$, such that the sum GDoF under OPC remains the same, i.e., $\dsumo(\tilde\balpha)=\dsumo(\balpha')$, while the sum GDoF under BPC cannot be better but may be possibly reduced, i.e., $\dsumb(\tilde\balpha)\geq\dsumb(\balpha')$. The main idea is that given a power allocation $\tilde\bbr$, since the GDoF achieved by each user are limited only by the strongest interference signal seen by its receiver, if the cross-channel strengths of the remaining interferers to that receiver are increased so that every interfering signal is as strong as the original strongest interference signal, this does not hurt the GDoF achieved under OPC, but it can potentially reduce  the GDoF achieved with BPC.

	For any given subset of users $\bbeta= \{\beta_1, \beta_2, \cdots, \beta_L\}$ and the given topology $\tilde\balpha$, we define the new topology $\balpha' = [\alpha_{ij}']_{i \in [K], j \in [K]} $, such that for all $i,j \in [K]$, 
	\begin{align} \label{eq:newAlpha}
		\alpha_{ij}' \triangleq \begin{cases}
			\tilde{\alpha}_{ij}, & \text{if } (i=j) \lor (i\notin\bbeta) \lor (j\notin\bbeta),\\ 
			\max_{k \in \bbeta, k\neq i} (\tilde{\alpha}_{ik} +\tilde{r}_{k})^+ - \tilde{r}_{j}, & \text{otherwise.}
		\end{cases}
	\end{align} 
	Note that $\alpha_{ij}' \geq \tilde{\alpha}_{ij}$, since for $i,j \in \bbeta$ and $i\neq j$, we have $\alpha_{ij}' \geq (\tilde{\alpha}_{ij} + \tilde{r}_{j})^+ - \tilde{r}_j \geq \tilde{\alpha}_{ij} + \tilde{r}_{j} - \tilde{r}_j = \tilde{\alpha}_{ij}$. Also note that relative to $\tilde\balpha$ the new topology $\balpha'$ differs in only those cross-links that both emerge and terminate at users in $\bbeta$. For such cross links, their strengths are lifted up to the extent that, with the power allocations $\tilde\bbr$, their interference levels at each receiver reach the maximal interference level at that receiver due to all undesired transmitters in $\bbeta$ in the original topology $\tilde\balpha$.
	Figure \ref{fig:newNet} illustrates an example of how the new topology $\balpha'$ is created from $\tbalpha$, where we assume $\tilde{r}_1 \geq \tilde{r}_2 \geq \cdots \geq \tilde{r}_K$, choose $\bbeta = [K]$ and focus on Receiver 1. We firstly identify the maximal interference level in the original topology in Figure \ref{fig:newNet}(a) as $\max_{k\in \bbeta, k \neq 1} (\tilde{\alpha}_{1k} +\tilde{r}_{k})^+ = \tilde{\alpha}_{12} +\tilde{r}_{2}$. Then in the new topology, we define $\alpha_{11}' = \tilde{\alpha}_{11}$, and $\alpha_{1j}' = \tilde{\alpha}_{12} +\tilde{r}_{2} - \tilde{r}_{j}$ for $j = 2,3,\cdots,K$. In the new topology $\balpha'$ with the power allocations $\tbbr$, the interference caused at Receiver $1$ by all undesired transmitters is thus at the same level, as shown in Figure \ref{fig:newNet}(b).
	In the new topology $\balpha'$, the sum GDoF $\dsumo(\balpha')\geq \dsumo(\tilde\balpha)$ because the GDoF achieved by the original power allocation $\tilde\bbr$ remain unchanged in the new topology. On the other hand, because increasing the cross-channel strengths cannot help a TIN scheme under OPC, this implies that $\dsumo(\balpha')= \dsumo(\tilde\balpha)$. For the same reason, i.e., because increasing cross-channel strengths cannot help a TIN scheme under BPC, we have $\dsumb(\tilde\balpha) \geq \dsumb(\balpha')$.
	
	Next, in the new topology $\balpha'$ let us apply the binary power allocation vector $\bbr(\bbeta)$, i.e., switch on the transmitters in $\bbeta$ at full power and keep the remaining transmitters switched off. Obviously, $\dsumb(\balpha') \geq D_\Sigma(\balpha',\bbr(\bbeta)) $. 
	
	Next let us show that for $i \in \bbeta$, we can further bound $d_i(\balpha', \bbr(\bbeta))$ from below as
	\begin{align}
		d_i(\balpha', \bbr(\bbeta)) &\geq {d}_{i}(\tilde\balpha,\tilde\bbr) - \tilde{r}_i + \min_{j \in \bbeta, j \neq i} \tilde{r}_{j}.\label{eq:newdb}
	\end{align} 
	This is because, by the definition of $d_i(\balpha', \bbr(\bbeta))$, we have for $i\in\bbeta$,
	\begin{align}
		d_i(\balpha', \bbr(\bbeta))
		&= \left( \tilde{\alpha}_{ii} - \max_{j \in \bbeta, j \neq i} \alpha_{ij}'  \right)^+ \label{eq:db-1} \\
		&\geq \tilde{\alpha}_{ii} - \max_{j \in \bbeta, j \neq i} \alpha_{ij}' \label{eq:db-2}\\
		&= \tilde{\alpha}_{ii} - \max_{j \in \bbeta, j \neq i} \left \{ \max_{k \in \bbeta, k \neq i} (\tilde{\alpha}_{ik} + \tilde{r}_{k})^+ - \tilde{r}_j \right \} \label{eq:db-3}\\
		&= \tilde{\alpha}_{ii} - \max_{k \in \bbeta, k \neq i} (\tilde{\alpha}_{ik} + \tilde{r}_{k})^+ +  \min_{j \in \bbeta, j \neq i} \tilde{r}_j \label{eq:db-4}\\
		&=  \left( (\tilde{\alpha}_{ii} + \tilde{r}_{i}) - \max_{k \in \bbeta, k \neq i} (\tilde{\alpha}_{ik} + \tilde{r}_{k})^+ \right ) - \tilde{r}_{i} +  \min_{j \in \bbeta, j \neq i} \tilde{r}_j \label{eq:db-5}\\
		&\geq  \left( (\tilde{\alpha}_{ii} + \tilde{r}_{i}) - \max_{k \in [K], k \neq i} (\tilde{\alpha}_{ik} + \tilde{r}_{k})^+ \right ) - \tilde{r}_{i} + \min_{j \in \bbeta, j \neq i} \tilde{r}_j \label{eq:db-6}\\
		&= {d}_{i}(\tilde\balpha,\tilde\bbr) - \tilde{r}_{i}  +  \min_{j \in \bbeta, j \neq i} \tilde{r}_j.\label{eq:db-7}
	\end{align}
	Equality (\ref{eq:db-3}) follows from the definition of $\alpha_{ij}'$  in (\ref{eq:newAlpha}). In (\ref{eq:db-3}) we note that the inner maximum is a constant with respect to the outer one; thus we have (\ref{eq:db-4}).
	Step (\ref{eq:db-6}) holds because we expand the scope of values over which the first maximum is taken, from $\bbeta$ to $[K]$. 
	Finally,  (\ref{eq:db-7}) follows due to (\ref{eq:dio}).

	Summing up (\ref{eq:newdb}) for all $i \in \bbeta$ and applying  the assumption $\tilde{r}_{\beta_1} \geq \tilde{r}_{\beta_2} \geq \cdots \geq \tilde{r}_{\beta_L}$, we obtain
	\begin{align} 
		D_\Sigma(\balpha',\bbr(\bbeta)) \geq \sum_{i= 1}^{L-1} \left(\tilde{d}_{\beta_i}(\tilde\balpha,\tilde\bbr) - \tilde{r}_{\beta_i} + \tilde{r}_{\beta_L}\right) + \left(\tilde{d}_{\beta_L}(\tilde\balpha,\tilde\bbr) - \tilde{r}_{\beta_L} + \tilde{r}_{\beta_{L-1}}\right).\label{eq:sumNewdb}
	\end{align}
	By putting (\ref{eq:sumNewdb}) together with $\dsumb(\tilde\balpha)\geq\dsumb(\balpha')  \geq D_\Sigma(\balpha',\bbr(\bbeta))$, we arrive at the bound  in (\ref{eq:boundB}).
\end{proof}

\begin{figure}[!t] 
	\centering
	\begin{subfigure}[b]{0.57\textwidth}
		\resizebox{\textwidth}{!}{
			\begin{tikzpicture}[scale=1]
				\def \w {1}
				\def \d {4}
				\def \h {1}
				\def \e {0.1}
				\node (gnd1) at (0,0) {};
				\node (gnd2) at (0, 0) {};
				
				\draw [line width = 0.5] (gnd1) -- ($(gnd1)  + ({7*\w}, 0)$);
				
				\draw [fill=white] ($(gnd1) + ({0.5*\w},0)$) rectangle ($(gnd1) + ({1.5*\w},{2*\h})$);
				\draw [fill=white] ($(gnd1) + ({2*\w},0)$) rectangle ($(gnd1) + ({3*\w},{1*\h})$);
				\draw [fill=white] ($(gnd1) + ({3.5*\w},0)$) rectangle ($(gnd1) + ({4.5*\w},{0.5*\h})$);
				\node at ($(gnd1) + ({5*\w}, \h) $) {$\cdots$};
				\draw [fill=white] ($(gnd1) + ({5.5*\w},0)$) rectangle ($(gnd1) + ({6.5*\w},{1.5*\h})$);
				
				\draw [fill=blue!50] ($(gnd1) + ({0.5*\w},0)$) rectangle ($(gnd1) + ({1.5*\w},{1.8*\h})$);
				\draw [fill=red!50] ($(gnd1) + ({2*\w},0)$) rectangle ($(gnd1) + ({3*\w},{0.8*\h})$);
				\draw [fill=red!50] ($(gnd1) + ({3.5*\w},0)$) rectangle ($(gnd1) + ({4.5*\w},{0.2*\h})$);
				\draw [fill=red!50] ($(gnd1) + ({5.5*\w},0)$) rectangle ($(gnd1) + ({6.5*\w},{0.2*\h})$);
				
				\draw [line width = 0.5, <->] ($(gnd1) + ({1.5*\w+\e},{1.8*\h})$) -- ($(gnd1)  + ({1.5*\w+\e},{2*\h})$) node [right = -2.5, pos=0.5] {\tiny$|\tilde{r}_1|$};
				\draw [line width = 0.5, <->] ($(gnd1) + ({3*\w+\e},{0.8*\h})$) -- ($(gnd1)  + ({3*\w+\e},{1*\h})$) node [right = -2.5, pos=0.5] {\tiny$|\tilde{r}_2|$};
				\draw [line width = 0.5, <->] ($(gnd1) + ({4.5*\w+\e},{0.2*\h})$) -- ($(gnd1)  + ({4.5*\w+\e},{0.5*\h})$) node [right = -2.5, pos=0.5] {\tiny$|\tilde{r}_3|$};
				\draw [line width = 0.5, <->] ($(gnd1) + ({6.5*\w+\e},{0.2*\h})$) -- ($(gnd1)  + ({6.5*\w+\e},{1.5*\h})$) node [right = -2.5, pos=0.75] {\tiny$|\tilde{r}_K|$};
				\node at ($(gnd1) + ({1.5*\w},{2*\h})$) [above right] {\small $\tilde{\alpha}_{11}$};
				\node at ($(gnd1) + ({3*\w},{1*\h})$) [above right] {\small $\tilde{\alpha}_{12}$};
				\node at ($(gnd1) + ({4.5*\w},{0.5*\h})$) [above = 3, right] {\small $\tilde{\alpha}_{13}$};
				\node at ($(gnd1) + ({6.5*\w},{1.5*\h})$) [above right] {\small $\tilde{\alpha}_{1K}$};
				\draw [line width = 0.5, dashed] ($(gnd1) + (0,{0.8*\h})$) node [left, align = center] {Interference\\Level}-- ($(gnd1) + ({7*\w},{0.8*\h})$);
				
				\node at ($(gnd1) + (\w, -0.4)$) {\small$\sqrt{P^{\tilde{\alpha}_{11}}} X_1$};
				\node at ($(gnd1) + ({2.5*\w}, -0.4)$) {\small$\sqrt{P^{\tilde{\alpha}_{12}}} X_2$};
				\node at ($(gnd1) + ({4*\w}, -0.4)$) {\small$\sqrt{P^{\tilde{\alpha}_{13}}} X_3$};
				\node at ($(gnd1) + ({6*\w}, -0.4)$) {\small$\sqrt{P^{\tilde{\alpha}_{1K}}} X_K$};
				
			\end{tikzpicture}			
		}
		\caption{}
	\end{subfigure}\hfill
	\begin{subfigure}[b]{0.43\textwidth}
		\resizebox{\textwidth}{!}{
			\begin{tikzpicture}[scale=1]
				
				\def \w {1}
				\def \d {4}
				\def \h {1}
				\def \e {0.1}
				\node (gnd1) at (0,0) {};
				\node (gnd2) at (0,0) {};
				
				\draw [line width = 0.5] (gnd2) -- ($(gnd2)  + ({7*\w}, 0)$);
				
				\draw [fill=white] ($(gnd2) + ({0.5*\w},0)$) rectangle ($(gnd2) + ({1.5*\w},{2*\h})$);
				\draw [fill=white] ($(gnd2) + ({2*\w},0)$) rectangle ($(gnd2) + ({3*\w},{1*\h})$);
				\draw [fill=white] ($(gnd2) + ({3.5*\w},0)$) rectangle ($(gnd2) + ({4.5*\w},{1.1*\h})$);
				\node at ($(gnd2) + ({5*\w}, 0.5*\h) $) {$\cdots$};
				\draw [fill=white] ($(gnd2) + ({5.5*\w},0)$) rectangle ($(gnd2) + ({6.5*\w},{2.1*\h})$);
				
				\draw [fill=blue!50] ($(gnd2) + ({0.5*\w},0)$) rectangle ($(gnd2) + ({1.5*\w},{1.8*\h})$);
				\draw [fill=red!50] ($(gnd2) + ({2*\w},0)$) rectangle ($(gnd2) + ({3*\w},{0.8*\h})$);
				\draw [fill=red!50] ($(gnd2) + ({3.5*\w},0)$) rectangle ($(gnd2) + ({4.5*\w},{0.8*\h})$);
				\draw [fill=red!50] ($(gnd2) + ({5.5*\w},0)$) rectangle ($(gnd2) + ({6.5*\w},{0.8*\h})$);
				
				\draw [line width = 0.5, <->] ($(gnd2) + ({1.5*\w+\e},{1.8*\h})$) -- ($(gnd2)  + ({1.5*\w+\e},{2*\h})$) node [right = -2.5, pos=0.5] {\tiny$|\tilde{r}_1|$};
				\draw [line width = 0.5, <->] ($(gnd2) + ({3*\w+\e},{0.8*\h})$) -- ($(gnd2)  + ({3*\w+\e},{1*\h})$) node [right = -2.5, pos=0.5] {\tiny$|\tilde{r}_2|$};
				\draw [line width = 0.5, <->] ($(gnd2) + ({4.5*\w+\e},{0.8*\h})$) -- ($(gnd2)  + ({4.5*\w+\e},{1.1*\h})$) node [right = -2.5, pos=0.5] {\tiny$|\tilde{r}_3|$};
				\draw [line width = 0.5, <->] ($(gnd2) + ({6.5*\w+\e},{0.8*\h})$) -- ($(gnd2)  + ({6.5*\w+\e},{2.1*\h})$) node [right = -2.5, pos=0.5] {\tiny$|\tilde{r}_K|$};
				\node at ($(gnd2) + ({1.5*\w},{2*\h})$) [above right] {\small $\alpha'_{11} = \tilde{\alpha}_{11}$};
				\node at ($(gnd2) + ({3*\w},{1.1*\h})$) [above right] {\small $\alpha'_{12}$};
				\node at ($(gnd2) + ({4.5*\w},{1.1*\h})$) [above right] {\small $\alpha'_{13}$};
				\node at ($(gnd2) + ({6.5*\w},{2.1*\h})$) [above right] {\small $\alpha'_{1K}$};
				\draw [line width = 0.5, dashed] ($(gnd2) + (0,{0.8*\h})$) -- ($(gnd2) + ({7*\w},{0.8*\h})$);
				
				\node at ($(gnd2) + (\w, -0.4)$) {\small$\sqrt{P^{\alpha'_{11}}} X_1$};
				\node at ($(gnd2) + ({2.5*\w}, -0.4)$) {\small$\sqrt{P^{\alpha'_{12}}} X_2$};
				\node at ($(gnd2) + ({4*\w}, -0.4)$) {\small$\sqrt{P^{\alpha'_{13}}} X_3$};
				\node at ($(gnd2) + ({6*\w}, -0.4)$) {\small$\sqrt{P^{\alpha'_{1K}}} X_K$};
				
			\end{tikzpicture}
		}
		\caption{}
	\end{subfigure}
	
	\caption{\small \it Creating a new network topology by raising the strength levels of the cross links in the original one. The blue bars are the desired signals, while the red bars are the interfering signals. The heights of the colored bars denote their power levels in dB scale. (a) The signals at Receiver 1 under optimal power control in the original network. (b) The signals with the same power allocation in the new network.} 
	\label{fig:newNet}
\end{figure}
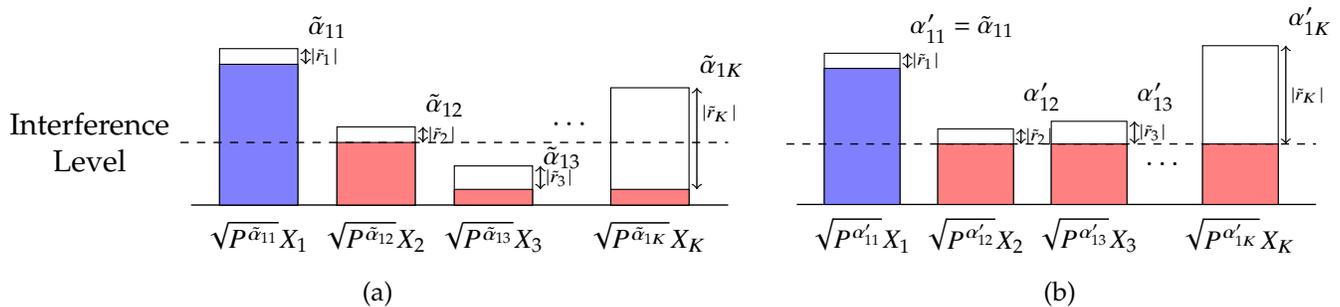

The next lemma is essential to this work as it bounds  the sum-GDoF gains as a function of $K$. 

\begin{lemma} \label{lemma:K^2bounds}
	Let $m \in \mathbb{N}$. When $K \geq m^2$, there exists a topology $\balpha^* \in \mathcal{A}_K$ such that
	\begin{align}
		\frac{\dsumo(\balpha^*)}{\dsumb(\balpha^*)} & \geq m \label{eq:K^2lbd}.
	\end{align} 
	On the other hand, when $K = m^2$, we have
	\begin{align}
		\sup_{\balpha \in \mathcal{A}_K^+} \frac{\dsumo(\balpha)}{\dsumb(\balpha)} &\leq \frac{3}{2}m-\frac{1}{2}. \label{eq:K^2ubd} 
	\end{align}
\end{lemma}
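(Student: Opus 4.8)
\textbf{Achievability part (the bound \eqref{eq:K^2lbd}).} I would take $\balpha^*$ to be the hierarchical network of Figure~\ref{fig:extremalK}: with users split into $m$ groups $\mathcal{G}_1,\dots,\mathcal{G}_m$ of $m$ users each, set $\alpha^*_{ii}=g$ for $i\in\mathcal{G}_g$, and for $i\ne j$ with $j\in\mathcal{G}_g$ set $\alpha^*_{ij}=g-1$ if $i$ lies in a group of index $\le g$ and $\alpha^*_{ij}=0$ otherwise (and, when $K>m^2$, append $K-m^2$ extra users that are isolated with zero desired strength, $\alpha^*_{ii}=0$). For the OPC value, apply the ``staircase'' allocation $r_i=-(g-1)$ for $i\in\mathcal{G}_g$: by \eqref{eq:tinGDoF} every user then sees its desired signal at level $1$ and every interfering term at level $\le0$, so $d_i=1$ for each user of the hierarchical part and $\dsumo(\balpha^*)\ge m^2$. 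For the BPC value I would show $\dsumb(\balpha^*)=m$: given any active set $\mathcal{S}$, let $g^\star$ be the largest group index meeting $\mathcal{S}$; if $\mathcal{S}$ hits $\mathcal{G}_{g^\star}$ in exactly one user, that user gets GDoF $g^\star\le m$ and every other active user lies in a lower group, hears that user at interference level $g^\star-1$, which is at least its own desired level, and so contributes $0$; if $\mathcal{S}$ hits $\mathcal{G}_{g^\star}$ in $\ge2$ users, those see mutual interference at level $g^\star-1$ against desired level $g^\star$, hence contribute $\le1$ each and $\le m$ in total, while again all lower-group users contribute $0$. Since switching on a single user of $\mathcal{G}_m$ already attains $m$, we get $\dsumb(\balpha^*)=m$ and the ratio is $\ge m$ (the padding users contribute nothing to either quantity).

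\textbf{Converse part (the bound \eqref{eq:K^2ubd}).} Fix $K=m^2$, a topology $\tilde\balpha\in\mathcal{A}_K^+$, and a power vector $\tilde\bbr$ achieving $\dsumo(\tilde\balpha)$. By Lemma~\ref{lemma:pwr} I may assume $\max_k\tilde r_k=0$, and I relabel users so $\tilde r_1\ge\cdots\ge\tilde r_K$; write $d_k=d_k(\tilde\balpha,\tilde\bbr)>0$ and $\rho_k=-\tilde r_k$, so $0=\rho_1\le\cdots\le\rho_K$ and $\dsumo(\tilde\balpha)=\sum_k d_k$. Rewriting Lemma~\ref{lemma:bo} in terms of the $\rho_k$, for every index set $\{\beta_1<\cdots<\beta_L\}$ (in the relabeled order) one has
\[
\dsumb(\tilde\balpha)\ \ge\ \sum_{i=1}^{L}d_{\beta_i}-\sum_{i=1}^{L-2}(\rho_{\beta_L}-\rho_{\beta_i}),
\]
with the sharper $\dsumb(\tilde\balpha)\ge d_{\beta_1}+\rho_{\beta_1}$ when $L=1$. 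Two immediate uses: summing the $L=1$ bound over all users gives $\dsumo(\tilde\balpha)\le K\dsumb(\tilde\balpha)-\sum_k\rho_k$, and in particular $\rho_K\le\dsumb(\tilde\balpha)$; and applying the inequality to any block of users sharing a common value of $\rho$ kills the penalty, so $\dsumb(\tilde\balpha)$ dominates the $d$-sum over that block. The plan is then to partition the $\rho$-sorted users into consecutive blocks $B_1,\dots,B_t$, apply the displayed bound to each, and add: the total penalty is at most $(\ell_{\max}-2)\sum_j\mathrm{spread}(B_j)\le(\ell_{\max}-2)\rho_K\le(\ell_{\max}-2)\dsumb(\tilde\balpha)$, where $\ell_{\max}$ is the largest block size and the spreads telescope because the blocks are consecutive, so $\dsumo(\tilde\balpha)\le(t+\ell_{\max}-2)\dsumb(\tilde\balpha)$. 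Since $t\,\ell_{\max}\ge K=m^2$, the balanced choice of $m$ blocks of size $m$ is forced, and it already yields $\dsumo(\tilde\balpha)\le(2m-2)\dsumb(\tilde\balpha)$ — this is precisely where the hypothesis $K=m^2$ enters.

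\textbf{The main obstacle.} The achievability direction is a direct verification. The real work is sharpening the converse from the easy $2m-2$ to the claimed $\tfrac32 m-\tfrac12$, which for $m\ge4$ is strictly smaller. The crude accounting bounds each within-block penalty term by the block width, but those terms $\rho_{\beta_L}-\rho_{\beta_i}$ actually decay across the block, and extracting a factor near $1/2$ from this requires a more delicate combining family: mixing block bounds of different lengths, keeping the sharper $+\sum_{i=1}^{L-2}\rho_{\beta_i}$ part of Lemma~\ref{lemma:bo}, and (I expect) a case split according to whether the multiset $\{\rho_k\}$ concentrates on a few well-separated power levels or spreads over many clustered ones. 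Choosing the weights $c_S\ge0$ on the applied subsets so that every $d_k$ receives total weight at least one while $\sum_S c_S$ and the residual $\rho$-penalties stay within $\tfrac32 m-\tfrac12$ times $\dsumb$ is the crux of the argument.
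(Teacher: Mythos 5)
Your achievability argument is correct and is essentially the paper's: the same $m$-group hierarchical topology, the same staircase allocation $r_i=-(g-1)$ for group $\mathcal{G}_g$ giving $\dsumo(\balpha^*)\ge m^2$, and the same case analysis on the highest active group showing $\dsumb(\balpha^*)=m$, with harmless padding for $K>m^2$.

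The converse, however, has a genuine gap. What you actually prove is $\dsumo(\tilde\balpha)\le(2m-2)\,\dsumb(\tilde\balpha)$: your block-partition accounting is sound (the per-block penalty is at most $(\ell_{\max}-2)$ times that block's $\rho$-spread, consecutive spreads telescope to at most $\rho_K$, and $\rho_K\le\dsumb(\tilde\balpha)$ by the singleton bound together with $d_K>0$), but $2m-2$ strictly exceeds the claimed $\tfrac32 m-\tfrac12$ for every $m\ge4$, and you explicitly defer the sharpening as ``the real work.'' That sharpening \emph{is} the content of \eqref{eq:K^2ubd}, so the lemma is not proved. (Incidentally, $2m-2$ would still suffice for Theorem \ref{thm:OsqrtK}, since $\tfrac{2m-2}{m-1}=2\le\tfrac52$, but it does not establish the lemma as stated.) The paper closes the gap not with a case split on how the $\rho_k$ are distributed, but with a single explicit weighted family of Lemma \ref{lemma:bo} bounds: all $m^2-m+1$ sliding windows $B_{[i,i+1,\dots,i+m-1]}$ of length $m$ with weight $1$, the nested prefixes $B_{[1]},B_{[1,2]},\dots,B_{[1,2,\dots,m-1]}$ with weight $1$, and the trailing singletons $B_{[m^2-m+1+j]}$ with weight $j$ for $j=1,\dots,m-1$. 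Every $d_k$ then receives total weight exactly $m$, the aggregate weight on $\dsumb(\tilde\balpha)$ is $(m^2-m+1)+(m-1)+\tfrac12m(m-1)=m\left(\tfrac32 m-\tfrac12\right)$, and the residual $\tilde r$ terms all carry a favorable sign, yielding $m\,\dsumo(\tilde\balpha)\le m\left(\tfrac32 m-\tfrac12\right)\dsumb(\tilde\balpha)$ directly. Your instinct that bounds of different lengths must be mixed is right, but no case analysis is required; the missing piece is this explicit dual certificate.
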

\begin{proof}
	First we show the lower bound (\ref{eq:K^2lbd}) with a topology $\balpha^* \in \mathcal{A}_{m^2}$ depicted in Figure \ref{fig:extremalK}.
	We divide the $m^2$ users into $m$ groups, each of which consists of $m$ users and is denoted by $\mathcal{G}_i$, $i\in\{1,2,...,m\}$, i.e.,
	\begin{align}
		\mathcal{G}_i=\left\{(i-1)m+1,(i-1)m+2,...,im \right\}.
	\end{align}
	In each group $\mathcal{G}_i$, all direct links have strength equal to $i$.
	For cross links, each transmitter in the group $\mathcal{G}_i$  connects to each receiver in $\mathcal{G}_j$ with strength $i-1$, where $j\leq i$ and $i,j\in\{1,2,...,m\}$. 
	All the other cross links are disconnected, i.e., transmitters in the $i^{th}$ group do not interfere with receivers in groups $i+1, i+2,\cdots, m$.
	As an example, when $m=2$, we get the $4$-user interference channel in Figure \ref{fig:extremalnet}(b), where $\mathcal{G}_1=\{1,2\}$ and $\mathcal{G}_2=\{3,4\}$.

	For the topology $\balpha^*$, each user achieves one GDoF when the power allocation vector $\bbr^* = \left(r_1^*, r_2^*, \cdots, r_K^* \right)$ is set as 
	\begin{align}
		r_k^*&=1-j, && \forall k\in\mathcal{G}_j,j\in [m]. \label{eq:KKpwr}
	\end{align}
	Thus the sum GDoF $D_\Sigma(\balpha^*, \bbr^*) = m^2$, implying that $\dsumo(\balpha^*) \geq m^2$. 
	Now consider binary power control, where we either activate a user by switching his transmitter on at full power, or deactivate the user by switching his transmitter off.
	In the topology $\balpha^*$, activating any one of the transmitters in group $\mathcal{G}_i$ at full power overwhelms all the receivers in $\mathcal{G}_j$, for all $j < i$; i.e., all users in $\bigcup_{j \in [K], j < i}\mathcal{G}_j$ get zero GDoF whether or not they are activated. This means that only the users in the group with the largest index will survive. So it suffices to consider one single group at a time and decide how many users in that group may be activated to maximize the sum GDoF. Activating one user in group $i$ yields sum-GDoF value of $i$, and activating $k$ users, $1<k\leq m$, in group $i$ yields a sum-GDoF value of $k$.  The optimal value is therefore $\dsumb(\balpha^*) = m$, which is achieved by activating all $m$ users in one single group only, or by activating any one user in group $\mathcal{G}_m$.
	Therefore, we establish the lower bound $\frac{\dsumo(\balpha^*)}{\dsumb(\balpha^*)} \geq m$ for $K = m^2$. 
	For $K \geq m^2$, the same lower bound still holds, because we can achieve the same bound with a topology $\balpha^* \in \mathcal{A}_K$, where we copy the one depicted in Figure \ref{fig:extremalK} to the first $m^2$ users, and remove all links associated with the rest $K-m^2$ users. Here we conclude the proof of (\ref{eq:K^2lbd}).

	Next, we show the upper bound (\ref{eq:K^2ubd}) for any $\tbalpha \in \mathcal{A}_K^+$ when $K = m^2$. Let $\tbbr$ be a power allocation vector associated with $\tbalpha$ as in Lemma \ref{lemma:bo}. 
	A key non-trivial step of the proof is the idea of partitioning the sum-GDoF in a particular way and applying bounds obtained from Lemma \ref{lemma:bo} to each partition. In the following we first present the bounds, and then  the explicit sum-GDoF partition.
	Without loss of generality, let us assume that
		\begin{align}
			\tilde{r}_1\geq \tilde{r}_2\geq\cdots\geq \tilde{r}_K,\label{eq:assumedelta}
		\end{align}
	which carries no loss of generality because it only amounts to a labeling of users 
	in this order.

	First consider the bounds $B_{[i,i+1,...i+m-1]}$. With the assumption (\ref{eq:assumedelta}), 
	\begin{align}
		B_{[i,i+1,...i+m-1]} \implies \dsumb(\tbalpha) \geq \sum_{j=i}^{i+m-1} d_j(\tbalpha, \tbbr) - \sum_{j=i}^{i+m-3} \tilde{r}_j + (m-2) \tilde{r}_{i+m-1}, \label{eq:B1}
	\end{align}
	where $i \in [m^2-m+1]$. Adding  the bounds $\sum_{i=1}^{m^2-m+1}B_{[i,i+1,...i+m-1]}$ we obtain
	\begin{align}
		&(m^2 -m + 1) \dsumb(\tbalpha) \notag \\
		&\geq \sum_{j=1}^{m-1} j d_j(\tbalpha, \tbbr)- \sum_{j=1}^{m-2} j \tilde{r}_j - (m-2) \tilde{r}_{m-1} +  \sum_{j=m}^{m^2-m+1} m d_j(\tbalpha, \tbbr) \notag \\
		&\quad + \sum_{j=1}^{m-1} (m-j) d_{m^2-m+1+j}(\tbalpha, \tbbr) + \sum_{j=1}^{m-2} j \tilde{r}_{m^2-m+1+j} + (m-2) \tilde{r}_{m^2}. \label{eq:B1sum}
	\end{align}
	Then we consider the bounds $B_{[1,2,...,i]}$, which are obtained from Lemma \ref{lemma:bo} and the assumption (\ref{eq:assumedelta}):
	\begin{align}
		B_{[1]} &\implies \dsumb(\tbalpha) \geq d_1(\tbalpha, \tbbr) - \tilde{r}_1, \label{eq:B21}\\
		B_{[1,2,...,i]} &\implies \dsumb(\tbalpha) \geq \sum_{j=1}^{i} d_j(\tbalpha, \tbbr) - \sum_{j=1}^{i-2} \tilde{r}_j + (i-2) \tilde{r}_{i}, \label{eq:B22}
	\end{align}
	where $i = 2,3,\cdots, m-1$. Adding  $B_{[1]} +\sum_{i=2}^{m-1}B_{[1,2,...,i]}$, we obtain,
	\begin{align}
		&(m-1) \dsumb(\tbalpha) \notag \\
		&\geq \sum_{j=1}^{m-1} (m-j) d_j(\tbalpha, \tbbr) - \tilde{r}_1 - \sum_{j=1}^{m-3}(m-2-j) \tilde{r}_j+ \sum_{j=3}^{m-1} (j-2) \tilde{r}_{j}\label{eq:B2sum-1}\\
		&\geq \sum_{j=1}^{m-1} (m-j) d_j(\tbalpha, \tbbr) + \sum_{j=3}^{m-1} (j-2) \tilde{r}_{j},\label{eq:B2sum-2}
	\end{align}
	where (\ref{eq:B2sum-2}) holds because $\tilde{r}_j\leq 0$ for $j \in [m^2]$.
	
	Next we consider the bounds $B_{[i+m^2-m+1]}$ together with (\ref{eq:assumedelta}), 
	\begin{align}
		B_{[i+m^2-m+1]} \implies \dsumb(\tbalpha) \geq d_{i+m^2-m+1}(\tbalpha, \tbbr) - \tilde{r}_{i+m^2-m+1}, \label{eq:B3}
	\end{align}
	where $i = 1,2,\cdots, m-1$. Adding $\sum_{i=1}^{m-1}iB_{[i+m^2-m+1]}$ we obtain,
	\begin{align}
		&\sum_{i=1}^{m-1} i \dsumb(\tbalpha) = \frac{1}{2}m(m-1) \dsumb(\tbalpha) \notag \\
		&\geq \sum_{i=1}^{m-1} \Big(i d_{i+m^2-m+1}(\tbalpha, \tbbr) - i \tilde{r}_{i+m^2-m+1}\Big). \label{eq:B3sum}
	\end{align}
	Finally, we sum up the bounds (\ref{eq:B1sum}), (\ref{eq:B2sum-2}) and (\ref{eq:B3sum}) and obtain
	\begin{align}
		&m\left(\frac{3}{2}m - \frac{1}{2} \right) \dsumb(\tbalpha) \notag \\
		&= \left[(m^2 -m + 1) + (m-1) + \frac{1}{2}m(m-1) \right] \dsumb(\tbalpha) \\
		& \geq  m \dsumo(\tbalpha) - \tilde{r}_1 - 2 \sum_{j=2}^{m-2} \tilde{r}_j -\tilde{r}_{m-1} - \tilde{r}_{m^2} \label{eq:BBsum-1}\\
		& \geq m \dsumo(\tbalpha),\label{eq:BBsum-2}
	\end{align}
	where (\ref{eq:BBsum-2}) holds because $\tilde{r_i} \leq 0 $ for all $i \in [m^2]$.
	Since inequality (\ref{eq:BBsum-2}) holds for all $\balpha \in \mathcal{A}_K^+$ regardless of the assumption (\ref{eq:assumedelta}), we have 
	\begin{align}
		\sup_{\balpha \in \mathcal{A}_K^+} \frac{\dsumo(\balpha)}{\dsumb(\balpha)} \leq \frac{3}{2}m-\frac{1}{2},
	\end{align}
	which concludes the proof of (\ref{eq:K^2ubd}).
\end{proof}

\section{Proof of Theorem \ref{thm:2-6usr}} \label{sec:proof2-6usr}

\subsection{Case: $K=2$}
It is known from \cite{2user_1,2user_2} already that binary power control is optimal for sum rate maximization in the $2$ user interference channel. 
This result applies to every SNR setting, and thus also to the GDoF framework, yielding $\mu_2=1$.
However, in order to lay the foundation of the proof for subsequent cases, let us  present an alternative proof based on Lemma \ref{lemma:bo}. Since OPC cannot be worse than BPC, the lower bound is trivial in this case, i.e., $\mu_2\geq 1$. 
Now consider the converse. For any network topology $\tilde\balpha\in\mathcal{A}_2^+$ and power allocation vector $\tilde\bbr$ as in Lemma \ref{lemma:bo}, we apply the bound $B_{[1,2]}$,
\begin{align}
	B_{[1,2]} &\implies \dsumb(\tilde\balpha) \geq (d_{1}(\tilde\balpha,\tilde\bbr)-\tilde{r}_1+\tilde{r}_2)+(d_{2}(\tilde\balpha,\tilde\bbr)-\tilde{r}_2+\tilde{r}_1) = \dsumo(\tilde\balpha). \label{eq:c2_1}
\end{align}
Therefore, we have $\sup_{\balpha \in \mathcal{A}_2^+} \frac{\dsumo(\balpha)}{\dsumb(\balpha)} \leq 1$, and Lemma \ref{lemma:mubound} yields the desired outer bound $\mu_2\leq \max \{ \mu_1, 1\} = 1$.  $\hfill\square$

\subsection{Case: $K=3$}
Let us start with the converse. 
For any network topology $\tilde\balpha \in \mathcal{A}_3^+$, and the power allocation vector $\tilde\bbr$ as in Lemma \ref{lemma:bo}, the bounds $B_{[1,2]}, B_{[2,3]}$ and $B_{[1,3]}$ follow the same reasoning in (\ref{eq:c2_1}) as follows:
\begin{align}
	B_{[1,2]} &\implies \dsumb(\tilde\balpha) \geq d_1(\tilde\balpha, \tilde\bbr) + d_2(\tilde\balpha, \tilde\bbr), \label{eq:c3_1}\\
	B_{[2,3]} &\implies \dsumb(\tilde\balpha) \geq d_2(\tilde\balpha, \tilde\bbr) + d_3(\tilde\balpha, \tilde\bbr), \label{eq:c3_2}\\
	B_{[1,3]} &\implies \dsumb(\tilde\balpha) \geq d_1(\tilde\balpha, \tilde\bbr) + d_3(\tilde\balpha, \tilde\bbr). \label{eq:c3_3}
\end{align}
Summing up (\ref{eq:c3_1})--(\ref{eq:c3_3}), we have $\sup_{\balpha \in \mathcal{A}_3^+} \frac{\dsumo(\balpha)}{\dsumb(\balpha)} \leq 3/2$. By Lemma \ref{lemma:mubound} we conclude the outer bound $\mu_3 \leq \max\{ \mu_2, 3/2 \} = 3/2.$

Next, for the lower bound on $\mu_3$, we consider the topology, say $\balpha_3$, depicted in Figure \ref{fig:extremalnet}(a). This topology has direct links set with strengths $\alpha_{11}=\alpha_{22}=1$  and $\alpha_{33} = 2$. The (red) cross links are set with strength 1.
With the power allocation $\bbr_3 = (0,0,-1)$, we  achieve the GDoF tuple $(1,1,1)$, hence  $\dsumo(\balpha_3)\geq 3$. 
It is not difficult to exhaustively verify that with binary power control $\dsumb(\balpha_3) = 2$.
Therefore, we establish the lower bound $\mu_3 \geq 3/2$. Since the lower bound matches the upper bound, we have $\mu_3=3/2$, which completes the proof for the case of $K=3$ users. $\hfill\square$

\subsection{Case: $K=4$}
Start with the converse. For any network topology in $\tilde\balpha \in \mathcal{A}_4^+$, and the power allocation vector $\tilde\bbr$ as in Lemma \ref{lemma:bo}, we have the following bounds, similar to (\ref{eq:c2_1}):
\begin{align}
	B_{[1,2]} &\implies \dsumb(\tilde\balpha) \geq d_1(\tilde\balpha, \tilde\bbr) + d_2(\tilde\balpha, \tilde\bbr), \label{eq:c4_1}\\
	B_{[3,4]} &\implies \dsumb(\tilde\balpha) \geq d_3(\tilde\balpha, \tilde\bbr) + d_4(\tilde\balpha, \tilde\bbr). \label{eq:c4_2}
\end{align}
Adding (\ref{eq:c4_1})--(\ref{eq:c4_2}) we obtain $\sup_{\balpha \in \mathcal{A}_4^+} \frac{\dsumo(\balpha)}{\dsumb(\balpha)} \leq 2$. Then we apply Lemma \ref{lemma:mubound} we get the desired upper bound $\mu_4\leq \max \{ \mu_3, 2 \} = 2$.

For the lower bound, we consider the topology, say $\balpha_4$, depicted in Figure \ref{fig:extremalK}(b), where the direct links have strengths $\alpha_{11}=\alpha_{22} = 1$ and $\alpha_{33} = \alpha_{44} = 2$, and the (red) cross links have strength 1.
We  achieve GDoF tuple $(1,1,1,1)$ with the power allocation $\bbr_4 =(0,0,-1,-1)$, so $\dsumo(\balpha_4) \geq 4$. For binary power control, by considering all $2^4$ possible power allocation vectors, we find $\dsumb(\balpha_4) = 2$.
Therefore, we establish the lower bound $\mu_4 \geq 2$. Since the lower bound matches the upper bound, we have $\mu_4=2$, which completes the proof for the case of $K=4$ users. $\hfill\square$

\subsection{Case: $K=5$} \label{section: K5}
Again, let us start with the converse. 
Consider any network topology $\tilde\balpha \in \mathcal{A}_5^+$, and any associated power allocation vector $\tbbr$ as in Lemma \ref{lemma:bo}. Without loss of generality, we further assume
\begin{align}
	0 \geq \tilde{r}_1 \geq \tilde{r}_2 \geq \cdots \geq \tilde{r}_5. \label{eq:assumedelta_K=5}
\end{align}
The assumption that $\tilde{r}_1 \geq \tilde{r}_2 \geq \cdots \geq \tilde{r}_5$ incurs no loss of generality since it is equivalent to the labeling of users in this order. The assumption $\tilde{r}_1=0$, which means that User $1$ transmits at full power, is justified by Lemma \ref{lemma:pwr}.
By Lemma \ref{lemma:bo} and the assumption (\ref{eq:assumedelta_K=5}), we have the following bounds.
\begin{align}
	B_{[5]}&\implies \dsumb(\tbalpha) \geq d_5(\tilde\balpha, \tilde\bbr)-\tilde{r}_5, \label{eq:c5_1}\\
	B_{[1,2]}&\implies \dsumb(\tbalpha) \geq d_1(\tilde\balpha, \tilde\bbr)+d_2(\tilde\balpha, \tilde\bbr), \label{eq:c5_2}\\
	B_{[1,4]}&\implies \dsumb(\tbalpha) \geq d_1(\tilde\balpha, \tilde\bbr)+d_3(\tilde\balpha, \tilde\bbr), \label{eq:c5_3}\\
	B_{[2,4]}&\implies \dsumb(\tbalpha) \geq d_2(\tilde\balpha, \tilde\bbr)+d_4(\tilde\balpha, \tilde\bbr), \label{eq:c5_4}\\
	B_{[1,2,3]}&\implies \dsumb(\tbalpha) \geq (d_1(\tilde\balpha, \tilde\bbr)+\tilde{r}_3)+(d_2(\tilde\balpha, \tilde\bbr)-\tilde{r}_2+\tilde{r}_3)+(d_3(\tilde\balpha, \tilde\bbr) - \tilde{r}_3+\tilde{r}_2), \label{eq:c5_5}\\
	B_{[3,4,5]}&\implies \dsumb(\tbalpha) \geq (d_3(\tilde\balpha, \tilde\bbr)-\tilde{r}_3+\tilde{r}_5)+(d_4(\tilde\balpha, \tilde\bbr)-\tilde{r}_4+\tilde{r}_5)+(d_5(\tilde\balpha, \tilde\bbr)-\tilde{r}_5+\tilde{r}_4). \label{eq:c5_6}
\end{align}
Summing up (\ref{eq:c5_1})--(\ref{eq:c5_6}), we obtain  
$\frac{\dsumo(\tbalpha)}{\dsumb(\tbalpha)} \leq 9/4$. This inequality holds for all $\balpha \in \mathcal{A}_5^+$ regardless of the assumption (\ref{eq:assumedelta_K=5}); as a result, we have $\sup_{\balpha \in \mathcal{A}_5^+} \frac{\dsumo(\balpha)}{\dsumb(\balpha)} \leq 9/4$. By Lemma \ref{lemma:mubound} we obtain the upper bound $\mu_5 \leq \max\{\mu_4, 9/4\} = 9/4$.

Next, we consider the topology, say $\balpha_5$, depicted in Figure \ref{fig:extremalnet}(c) for the lower bound on $\mu_5$. The direct links have strengths $\alpha_{ii} = 2$ for $i=1,2,3$, and $\alpha_{ii} = 4$ for $i = 4,5$. The strengths of the cross links are set as $\alpha_{i3} = 1$ for $i = 1,2,4,5$ (red solid), and $\alpha_{i4} = \alpha_{j5} = 2$ for $i, j \in [5], i\neq 4$ and $j \neq 5$ (blue dashed). With the power allocation vector $\bbr_5 =(0,0,-1,-2,-2)$, we  achieve the GDoF tuple $(2,2,1,2,2)$, so $\dsumo(\balpha_5) \geq 9$. On the other hand, we can also exhaustively verify that with binary power control, $\dsumb(\balpha_5)=4$. Therefore, we establish the lower bound $\mu_5\geq 9/4$. Since this matches the upper bound, we conclude that $\mu_5=9/4$, which completes the proof for the case of $K=5$ users. $\hfill\square$

\subsection{Case: $K=6$}
We start with the converse. For any network topology $\tilde\balpha \in \mathcal{A}_6^+$ and the associated $\tilde\bbr$ as defined in Lemma \ref{lemma:bo}, we first break down $16 \dsumo(\tilde\balpha)$ as follows.
\begin{align}
	&16 \dsumo(\tilde\balpha) \notag \\
	&= 6d_5(\tilde\balpha,\tilde\bbr)+10d_6(\tilde\balpha,\tilde\bbr)\\
	&\quad +(d_1(\tilde\balpha,\tilde\bbr)+d_2(\tilde\balpha,\tilde\bbr))+6(d_1(\tilde\balpha,\tilde\bbr)+d_2(\tilde\balpha,\tilde\bbr)+d_4(\tilde\balpha,\tilde\bbr))\\
	&\quad+8\sum_{i=1}^3d_i(\tilde\balpha,\tilde\bbr) +  3\sum_{i=3}^5d_i(\tilde\balpha,\tilde\bbr)+2\sum_{i=4}^6d_i(\tilde\balpha,\tilde\bbr)+4\sum_{i=3}^6d_i(\tilde\balpha,\tilde\bbr)+\sum_{i=1}^5d_i(\tilde\balpha,\tilde\bbr). \label{eq:c6_1}
\end{align}
Next, without loss of generality we assume
\begin{align}
	0 = \tilde{r}_1 \geq \tilde{r}_2 \geq \cdots \geq \tilde{r}_6, \label{eq:assumedelta_K=6}
\end{align}
for the same reason as we have for the assumption (\ref{eq:assumedelta_K=5}) in the $K=5$ case in Section \ref{section: K5}.
We then apply (\ref{eq:assumedelta_K=6}) and the bounds $B_{[5]}$, $B_{[6]}$, $B_{[1,2]}$, $B_{[1,2,4]}$, $B_{[1,2,3]}$, $B_{[3,4,5]}$, $B_{[4,5,6]}$, $B_{[3,4,5,6]}$ and $B_{[1,2,3,4,5]}$ to the respective summands in (\ref{eq:c6_1}), and get
\begin{align}
	16 \dsumo(\tilde\balpha) \leq 16 \dsumo(\tilde\balpha) -\tilde{r}_2 \leq41 \dsumb(\tilde\balpha), \label{eq:c6_2}
\end{align}
where the first inequality holds because $\tilde{r}_2 \leq 0$. 
Since inequality (\ref{eq:c6_2}) holds for all $\balpha \in \mathcal{A}_6^+$ regardless of the assumption (\ref{eq:assumedelta_K=6}), we have $\sup_{\balpha \in \mathcal{A}_6^+} \frac{\dsumo(\balpha)}{\dsumb(\balpha)} \leq 41/16$. By Lemma \ref{lemma:mubound} we have $ \mu_6 \leq \max \{ \mu_5, 41/16\} = 41/16$, which is the desired converse bound.

Next, for achievability, let us consider the topology, say $\balpha_6$, depicted in Figure \ref{fig:extremalnet}(d). The direct link are set with strengths $\alpha_{11} = \alpha_{22} = 8, \alpha_{33}=10, \alpha_{44}= 12,$ and $\alpha_{55} = \alpha_{66} = 16$. The cross links emitting from Transmitter 3 have $\alpha_{i3} = 5$ (red solid), those from Transmitter 4 have $\alpha_{i4} = 6$ (blue dashed), those from Transmitter 5 have $\alpha_{i5} = 8$ (green dotted), and those from Transmitter 6 have $\alpha_{i6} = 10$ (orange dot-dashed). The GDoF tuple $(8,8,5,6,8,6)$ is achieved with power allocation $\bbr_6=(0,0,-5,-6,-8,-10)$; thus $ \dsumo(\balpha_6)\geq 41$. By exhaustive consideration of all $2^6$ BPC cases, we verify that $\dsumb(\balpha_6) =16$ with binary power control. Therefore, we establish that $\mu_6\geq \frac{41}{16}$ and complete the proof for the case of $K=6$. $\hfill\square$

\subsection{Proof of Corollary \ref{cor:2-4usr} } \label{sec:proofCor}
The extremal sum-GDoF gain, $\mu_K$, is already a lower bound for $\eta_K$, i.e., $\eta_K\geq \mu_K$, since $\mu_K$ is obtained only in the asymptotic high SNR regime ($P \rightarrow \infty$), while $\eta_K$ allows all SNR regimes.

For ease of exposition in finding bounds for $\eta_K$ $(K=2,3,4)$, we define the following notations. For a topology $\tbalpha \in \mathcal{A}_K$ of a $K$ interference network, let $\tbbr$ be any power allocation that achieves sum rate $\Rsumo(\tbalpha)$.  For $\usr \subset [K]$, we define $\bbr(\usr) = \left(r_1(\usr), r_2(\usr), \cdots, r_K(\usr)\right)$ be the power allocation with $r_i(\usr) = 0$ if $i \in \usr$ and $r_i(\usr) = -\infty$ if otherwise. 

The upper bound for $\eta_2 \leq 1$ is straightforward because BPC maximizes the sum rate when interference is treated as noise in a two-user interference network, regardless of its topology \cite{2user_1,2user_2}. In the following we use this fact to find upper bounds for $\eta_3$ and $\eta_4$ respectively. 
For all $\tbalpha \in \mathcal{A}_3$ and all $P \geq 0$,
\begin{align}
	&2 \Rsumo(\tbalpha, P) \notag\\
	&=  (R_{1}(\tbalpha, P, \tbbr) + R_{2}(\tbalpha, P, \tbbr)) + (R_{2}(\tbalpha, P, \tbbr) + R_{3}(\tbalpha, P, \tbbr)) + (R_{1}(\tbalpha, P, \tbbr) + R_{3}(\tbalpha, P, \tbbr)) \nonumber \\
	&\leq \max_{\usr \subset \{1,2\}} R_\Sigma(\tbalpha, P, \bbr(\usr)) + \max_{\usr \subset \{2,3\}} R_\Sigma(\tbalpha, P, \bbr( \usr )) + \max_{\usr \subset \{1,3\}} R_\Sigma(\tbalpha, P, \bbr( \usr )) \label{eq:sumRate3}\\
	&\leq  3 \Rsumb(\tbalpha, P), \label{eq:sumRate3-1}
\end{align}
where (\ref{eq:sumRate3}) holds as BPC maximizes the sum rate among all power allocations in two-user interference networks. Inequality (\ref{eq:sumRate3-1}) leads to the bound $\eta_3 = \sup_{\balpha \in \mathcal{A}_3, P \geq 0} \frac{\Rsumo(\balpha, P)}{\Rsumb(\balpha, P)} \leq \frac{3}{2}$.
We apply the same trick to find an upper bound for $\eta_4$. 
For all $\tbalpha \in \mathcal{A}_4$ and all $P \geq 0$, we have
\begin{align}
	&\Rsumo(\tbalpha, P) \notag \\
	&= \left( R_{1}(\tbalpha, P, \tbbr) + R_{2}(\tbalpha, P, \tbbr) \right) + \left( R_{3}(\tbalpha, P, \tbbr) + R_{4}(\tbalpha, P, \tbbr) \right)  \nonumber \\
	&\leq  \max_{\usr \subset \{1,2\}} R_\Sigma(\tbalpha, P, \bbr(\usr)) + \max_{\usr \subset \{3,4\}} R_\Sigma(\tbalpha, P, \bbr(\usr)) \label{eq:sumRate4} \\
	&\leq  2 \Rsumb(\tbalpha, P), \nonumber
\end{align}
which leads to  $\eta_4 = \sup_{\balpha \in \mathcal{A}_4, P \geq 0} \frac{\Rsumo(\balpha, P)}{\Rsumb(\balpha, P)} \leq 2$. 
Here we conclude the proof. $\hfill\square$

\section{Proof of Theorem \ref{thm:OsqrtK} } \label{sec:proofSqrtK}
First we show $\mu_K/\sqrt{K} \leq 5/2$ for all $K \in \mathbb{N}$. This upper bound holds for $K = 2$ and $3$ by Theorem \ref{thm:2-6usr}. To argue it holds for $K \geq 4$ as well, we apply Lemma \ref{lemma:mubound} for arbitrary $K \geq 4$, where $ m-1 < \sqrt{K} \leq m$ for some integer $m \geq 2$, and obtain,
\begin{align}
	\frac{\mu_K}{\sqrt{K}} 
	&\leq \max \left \{ \frac{\mu_{K-1}}{\sqrt{K}}, \frac{1}{\sqrt{K}}\sup_{\balpha \in \mathcal{A}_K^+ }\frac{\dsumo(\balpha)}{\dsumb(\balpha)} \right \} \label{eq:muK-1}\\
	&\leq \max \left \{ \frac{\mu_{K-1}}{\sqrt{K-1}}, \frac{1}{m-1} \sup_{\balpha \in \mathcal{A}_{m^2}^+ } \frac{\dsumo(\balpha)}{\dsumb(\balpha)} \right \} \label{eq:muK-2}\\
	&\leq \max \left \{ \frac{\mu_{K-1}}{\sqrt{K-1}}, \frac{\frac{3}{2}m-\frac{1}{2}}{m-1} \right \} \label{eq:muK-3}\\
	&\leq \max \left \{ \frac{\mu_{K-1}}{\sqrt{K-1}}, \frac{5}{2} \right \}. \label{eq:muK-4}
\end{align}
First we apply Lemma \ref{lemma:increase} to obtain (\ref{eq:muK-2}).
Then we apply (\ref{eq:K^2ubd}) from Lemma \ref{lemma:K^2bounds} to obtain (\ref{eq:muK-3}). 
Inequality (\ref{eq:muK-4}) follows because $m \geq 2$.

Next, we can argue from (\ref{eq:muK-4}) and Theorem \ref{thm:2-6usr} that $\frac{\mu_K}{\sqrt{K}} \leq 5/2$ for all $K \in \mathbb{N}$. 
To argue this by contradiction, suppose for some integer $\ell > 4$ we have $\frac{\mu_{\ell}}{\sqrt{\ell}} > 5/2$.
Then by applying (\ref{eq:muK-4}) we have $\max \left \{ \frac{\mu_{\ell-1}}{\sqrt{\ell-1}}, 5/2 \right \} \geq \frac{\mu_{\ell}}{\sqrt{\ell}} > 5/2$. The maximum on the leftmost side cannot be $5/2$, so we have $\frac{\mu_{\ell-1}}{\sqrt{\ell-1}} > 5/2$. By induction, we can deduce $\frac{\mu_{4}}{\sqrt{4}} > 5/2$. But this contradicts  Theorem \ref{thm:2-6usr}, which already established that $\frac{\mu_{4}}{\sqrt{4}} = 1$. 
As a result, $\frac{\mu_{K}}{\sqrt{K}} \leq 5/2$ for all $K \in \mathbb{N}$. 

Finally, we establish the lower bound in Theorem \ref{thm:OsqrtK}.  For $K$ satisfying $m-1 < \sqrt{K} \leq m$, we have
\begin{align}
	\frac{\mu_K}{\sqrt{K}} &\geq \frac{m-1}{\sqrt{K}} \geq \frac{m-1}{m},
\end{align}
where the first inequality is due to (\ref{eq:K^2lbd}) in Lemma \ref{lemma:K^2bounds}.
As a result, we have $\liminf_{K \rightarrow \infty} \frac{\mu_K}{\sqrt{K}} \geq 1$. Here we conclude the proof. $\hfill\square$

\section{Conclusion} \label{sec:conclusion}
Using  ideas from GDoF analyses and extremal network theory, we studied the extremal gain of  optimal power control over binary power control  especially in large interference networks, in search of theoretical counterpoints to well established insights from numerical studies. Whereas numerical studies have established that in most practical settings binary power control is close to optimal \cite{Alouini_cell, Hong_D2D, Andrews_D2D, Schober_D2D, Lozano_D2D}, our extremal analysis shows not only that there exist settings where the gain from optimal power control can be quite significant, but also  bounds the extremal values of such gains from a GDoF perspective. As our main contribution, we explicitly characterize the extremal GDoF gain of optimal over binary power control as $\Theta\left(\sqrt{K}\right)$ for all $K$. For $K=2,3,4,5,6$ users, the precise extremal gain is found to be $1, 3/2, 2, 9/4$ and $41/16$, respectively. Networks shown to achieve the extremal gain may be interpreted as multi-tier heterogeneous networks. 

It is worthwhile to note that the findings of this work do not contradict conventional wisdom that binary power control is generally close to optimum. Indeed  numerical experiments suggest that such extremal gains are unlikely to be encountered in practice. Even in the heterogeneous multi-tier topologies that emerge as extremal networks, numerical experiments suggest that the extremal gains are manifested only at very high SNRs. Figure \ref{fig:BPC_sim} highlights this  sobering insight with a simple numerical simulation for the network topology depicted in Figure \ref{fig:extremalK} with $K=9$ users. This topology, labeled $\balpha_9$, achieves sum-GDoF gain $\sqrt{K} = 3$ according to the lower bound (\ref{eq:K^2lbd}) in Lemma \ref{lemma:K^2bounds}. In Figure \ref{fig:BPC_sim} we illustrate the sum rate achieved by two power control policies. The black dashed curve, labeled  $R_{\Sigma,b}$, is the sum rate achieved by binary power control; i.e. $R_{\Sigma,b} = \Rsumb(\balpha_9, P)$. On the other hand, the red solid curve, labeled with $R_{\Sigma}\eqref{eq:KKpwr}$, is the sum rate achieved by the power allocation vector specified in \eqref{eq:KKpwr}. The power allocation (\ref{eq:KKpwr}) is motivated by the high SNR setting and indeed it reaches $\dsumo(\balpha_9)$. Such a power allocation does not necessarily achieve $\Rsumo(\balpha_9, P)$ for finite $P$. This is reflected by the observation that $R_\Sigma$ is less than $\Rsumb$ at very low SNRs ($P \leq 5$dB). Nevertheless, here we still use $R_{\Sigma}\eqref{eq:KKpwr}$ as an approximation of $\Rsumo(\balpha_9, P)$ because they are close at high SNRs. The performance gap of the two sum rates  is quantified by the sum-rate gain $R_\Sigma\eqref{eq:KKpwr}/R_{\Sigma,b}$ (blue dot-dashed curve). The sum-rate gain grows as $P$ increases, and eventually approaches $3$ as $P$ goes to infinity, which is guaranteed by (\ref{eq:K^2lbd}). However, the sum-rate gain is relatively modest at moderate SNRs, e.g., it is below 2 when $P \leq 20$dB. Thus, while the extremal gain is valuable as a sharp theoretical limit, and the $\Theta(\sqrt{K})$ scaling is remarkable, it offers only a complementary perspective from asymptotic analysis, and not a contradiction to the conventional wisdom that binary power control is generally close to optimal in  practical settings.

\begin{figure}[t!]
	\begin{center}
		\includegraphics[width=0.75\textwidth]{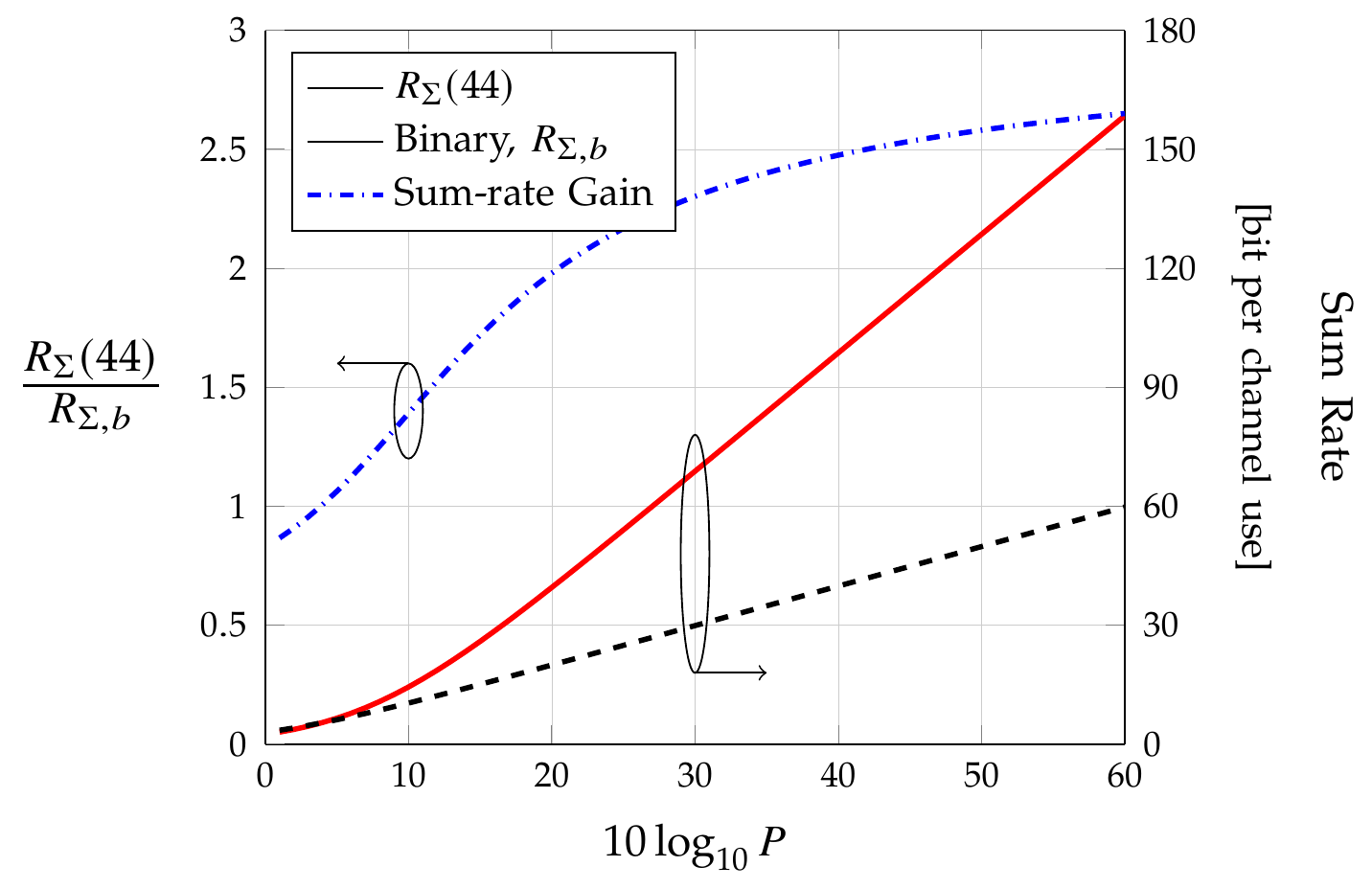}
		\caption{\small \it Sum rates achieved in the network depicted in Figure \ref{fig:extremalK} for $K = 9$ with two power control strategies: the GDoF-optimal power allocation scheme from (\ref{eq:KKpwr}) is used as a proxy for optimal power control and yields the red solid curve, denoted as $R_\Sigma\eqref{eq:KKpwr}$, while the optimal binary power control scheme yields the black dashed curve, denoted as $R_{\Sigma,b}$. Their performance gap is quantified with the sum-rate gain $\left. R_\Sigma\eqref{eq:KKpwr}\middle/ R_{\Sigma,b} \right.$ (blue dot-dashed curve). }
		\label{fig:BPC_sim}
	\end{center}
\end{figure}

\ifCLASSOPTIONcaptionsoff
  \newpage
\fi





\end{document}